\PassOptionsToPackage{tablesfirst}{endfloat}
\documentclass[useAMS,referee,usenatbib]{biom}
\usepackage{graphicx}
\usepackage{amsmath}
\usepackage{amssymb}
\usepackage{setspace}   
\usepackage[colorlinks=true,linkcolor=blue,citecolor=blue,urlcolor=blue]{hyperref}
\newcommand{\rev}[1]{#1}
\newcommand{\snew}[1]{#1}
\newcommand{\amk}[1]{#1}
\newtheorem{proposition}{Proposition}

\title[Objective-Driven Empirical Bayes Screening]{Multi-Objective Composite Longitudinal Biomarker Scores for Improved Cancer Risk Assessment}

\author{Bitan Sarkar$^{1}$, Ana Maria Kenney$^{2}$, James P. Long$^{1}$, Johannes F. Fahrmann$^{3}$, \\
\bf Samir Hanash$^{3}$, Kim-Anh Do$^{1}$, and Ehsan Irajizad$^{1,*}$\email{eirajizad@mdanderson.org} \\
$^{1}$Department of Biostatistics, The University of Texas MD Anderson Cancer Center, Houston, Texas, U.S.A. \\
$^{2}$Department of Statistics, University of California, Irvine, California, U.S.A. \\
$^{3}$Department of Clinical Cancer Prevention, The University of Texas MD Anderson \\
Cancer Center, Houston, Texas, U.S.A.}

\begin{document}

\renewcommand{\figureplace}{}
\renewcommand{\tableplace}{}

\pagerange{\pageref{firstpage}--\pageref{lastpage}}

\begin{abstract}
\amk{Repeated blood-based biomarker measurements can improve cancer risk assessment by capturing longitudinal changes missed by single-time-point analyses. Parametric Empirical Bayes (PEB) incorporates prior measurements to estimate individualized reference values, but existing implementations do not account for the time between measurements and rely on predefined panels with fixed combination rules. We developed improved Parametric Empirical Bayes (iPEB), which accounts for the intervals between serial measurements, adjusts for covariates, and performs feature selection and optimized biomarker combination. iPEB optimizes biomarker weights for specific clinical objectives, such as maximizing sensitivity at a prespecified specificity or diagnostic lead time. We evaluated iPEB through simulations and a real-world application using six protein biomarkers (pro-SFTPB, CEA, CA125, CYFRA~21-1, osteopontin, and HE4) from a case-control study nested within the Prostate, Lung, Colorectal, and Ovarian (PLCO) Cancer Screening Trial. The analysis included 324 lung cancer cases and 1{,}674 controls with at least two serial measurements; six centers were used for model development and four for independent validation. Optimized for sensitivity at 99\% specificity, iPEB achieved 24.2\% sensitivity in the independent test set, compared with 18.2\% for conventional PEB applied to the same four-marker panel. Optimized instead for lead time, iPEB added approximately 50 days of lead time at that stringent operating point. iPEB improved lung cancer risk assessment in independent PLCO data, supporting objective-driven optimization of longitudinal biomarkers for early detection.}
\end{abstract}

\begin{keywords}
Cancer screening; Empirical Bayes; Irregular visit spacing; Lead time;
Longitudinal data analysis; Multi-objective optimization.
\end{keywords}

\maketitle

\section{Introduction}
\label{sec:intro}\label{firstpage}
Early cancer detection through accurate risk assessment is fundamental to population-level cancer screening. Effective screening strategies must identify preclinical disease sufficiently early to enable intervention while maintaining high specificity to minimize unnecessary follow-up procedures, patient anxiety, and potential harm \citep{Prorok2000,Gohagan2000}. Blood-based biomarker panels are attractive because they are minimally invasive, repeatable, and can exploit an individual's longitudinal history rather than a single cross-sectional snapshot.

Parametric Empirical Bayes (PEB) provides a pragmatic foundation for individualized screening thresholds by borrowing strength from a healthy reference population to stabilize person-specific estimates and control false positives under biological and technical variability \citep{robbins1992empirical,efron1973stein,carlin2000bayes,mcintosh2003parametric}. By comparing a subject's current measurement with the distribution expected from that subject's own history, PEB adapts the decision rule to between-person heterogeneity and within-person trajectories, and it improves on fixed single-threshold (ST) rules that ignore history \citep{mcintosh2003parametric}. There are existing studies on \rev{lung}, pancreatic, and l iver cancers for improved risk assessment using \rev{the} PEB algorithm \citep{fahrmann2025lead, irajizad2024biomarker, tayob2016improved}\rev{.}

Despite its success, current implementations of PEB have two important limitations. First, PEB does not explicitly account for the time elapsed between consecutive biomarker measurements. Consequently, an identical deviation from an individual's expected biomarker value is interpreted the same whether the previous sample was obtained three months, six months, or one year earlier, even though these intervals correspond to markedly different biological expectations for disease progression \citep{pepe2003statistical,JanesPepe2009,HeagertyLumleyPepe2000}. Second, when multiple biomarkers provide complementary information, PEB relies on predefined biomarker panels with fixed combination rules and lacks a principled framework for feature selection or learning an optimal composite score. As a result, biomarker weights cannot be optimized for specific clinical objectives, such as maximizing sensitivity at a fixed specificity, extending diagnostic lead time, or reducing unnecessary referrals \citep{pepe2003statistical,Hand2009}.

To address these limitations, we developed improved Parametric Empirical Bayes (iPEB), a generalized framework for longitudinal biomarker analysis that jointly models biomarker trajectories, performs objective-aligned feature selection, and learns optimized biomarker combinations. Unlike conventional PEB, iPEB explicitly incorporates the time elapsed between consecutive measurements, allowing prediction uncertainty to increase as the interval between visits grows while accounting for subject-specific longitudinal trends and relevant covariates. This enables more accurate interpretation of longitudinal biomarker changes by distinguishing expected biological variation from disease-related deviations. Rather than relying on predefined biomarker panels and fixed combination weights, iPEB optionally performs feature selection and optimizes biomarker weights directly according to the desired clinical objective. Specifically, iPEB can be optimized to maximize sensitivity at a prespecified specificity \citep{ghasemi2025novel,khoshfekr2025smags}, maximize diagnostic lead time, or jointly balance sensitivity and lead time subject to constraints on specificity or referral rate. The framework is modular and generalizes conventional PEB, with conventional PEB incorporated as a special case within the broader iPEB framework.

Because PEB already \snew{outperforms} \rev{the ST rule} \citep{fahrmann2025lead, irajizad2024biomarker, tayob2016improved}\rev{,} we frame all combinations as iPEB against PEB applied to a clinically established biomarker panel: the operative question is whether objective-driven multi-marker weighting improves on a strong PEB baseline. 

The remainder of the manuscript is organized as follows.
Section~\ref{sec:method} develops the method: the time-gap--aware layer (Section~\ref{sec:layer}), the multi-marker composite and its calibrated per-visit threshold (Section~\ref{sec:composite}), and the objective-driven weight optimization (Section~\ref{sec:weights}). Section~\ref{sec:theory} states the theoretical properties (calibration, minimum-variance prediction, and a power advantage), with proofs in Web Appendix~A. Section~\ref{sec:sim} reports a simulation study and Section~\ref{sec:app} an application to the Prostate, Lung, Colorectal, and Ovarian (PLCO) lung cohort. Section~\ref{sec:disc} concludes.

\section{The iPEB Method}
\label{sec:method}

\rev{Operationally, iPEB layers three user-selectable choices on the objective-driven weighting common to every setting: a random-slope or intercept-only mean model (chosen by the visit count), AR(1)/OU or i.i.d.\ innovations (chosen by whether serial correlation or drift is plausible), and an optional feature-selection step to a target panel size; the weighting itself has a scalar and a multivariate variant, selected automatically on validation.}

\subsection{Notation and the marginal PEB baseline}
\label{sec:notation}
We observe longitudinal panels on $n$ individuals; index subjects by $i=1,\dots,n$, biomarkers by $j=1,\dots,p$, and visits by $t\in\mathcal T_i$. Let $Y_{ijt}$ be biomarker $j$ for subject $i$ at visit $t$, let $\mathcal H$ denote the healthy reference set (no diagnosis during follow-up), and let $\mathcal C$ denote the cases, with diagnosis time $T_i$. Standard marginal PEB posits, for $i\in\mathcal H$,
\begin{equation}
Y_{ijt}\mid b_{ij}\sim\mathcal N(\mu_j+b_{ij},\,\sigma_j^2),\qquad b_{ij}\sim\mathcal N(0,\tau_j^2),
\label{eq:peb-base}
\end{equation}
with $\theta_j=(\mu_j,\sigma_j^2,\tau_j^2)$ estimated on $\mathcal H$, and forms the history-adjusted $z$-score $R_{ijt}=\{Y_{ijt}-\hat\mu_j-\hat b_{ij}(t)\}/\hat\sigma_j$ under i.i.d.\ residuals. Neither the mean adjustment nor the variance depends on the time gap between visits. \snew{By referencing each measurement against the subject's own history, PEB improves on a fixed single-threshold rule applied to the raw marker \citep{mcintosh2003parametric}.} iPEB relaxes both restrictions of \eqref{eq:peb-base}: it makes the residual structure time-gap--aware, and it replaces the single-marker score with an objective-driven composite.

\subsection{Time-gap--aware PEB layer}
\label{sec:layer}
For subject $i$, visit $t$, and marker $j$, let $h_{it}$ be a \rev{centered time index (visit time in months, centered so that the random intercept is the marker level at the reference time)} and $x_{it}$ a vector of covariates at visit~$t$ \rev{(here age, assay batch, and calendar year; these generally vary across visits, and a time-invariant covariate is simply the special case of a column that is constant in $t$)}. On healthy training subjects we posit the working model
\begin{equation}
Y_{ijt}=\mu_j+x_{it}^\top\beta_j+b^{(0)}_{ij}+b^{(1)}_{ij}h_{it}+\varepsilon_{ijt},
\label{eq:layer}
\end{equation}
with random effects $(b^{(0)}_{ij},b^{(1)}_{ij})^\top\sim\mathcal N(0,D_j)$ and residuals $\varepsilon_{ijt}$ following either an \snew{AR(1) (first-order autoregressive)} process with per-month correlation $\phi_j$ (effective correlation $\phi_j^{\Delta}$ over a gap $\Delta$) or a continuous-time \snew{Ornstein--Uhlenbeck (OU)} process with rate $\lambda_j$ (correlation $e^{-\lambda_j\Delta}$)\snew{; the OU form is the general continuous-time process, with AR(1) its special case on an equally spaced grid}. Fitting \eqref{eq:layer} on healthy training data by restricted maximum likelihood or a state-space filter \rev{first removes the fixed covariate effects $x_{it}^\top\beta_j$ (here age, batch, and calendar year) and the subject's own history to form} the one-step innovation
\begin{equation}
\nu_{ijt}=Y_{ijt}-\hat\mu_j-x_{it}^\top\hat\beta_j-\widehat{b^{(0)}_{ij}}(t)-\widehat{b^{(1)}_{ij}}(t)\,h_{it},
\label{eq:innov}
\end{equation}
whose prediction variance \emph{increases with the gap} $\Delta_{it}$ since the previous visit,
\begin{equation}
S_{ijt}=\sigma_j^2\bigl(1-\phi_j^{2\Delta_{it}}\bigr)\ \ \text{(AR(1))},\qquad
S_{ijt}=\sigma_j^2\bigl(1-e^{-2\lambda_j\Delta_{it}}\bigr)\ \ \text{(OU)}.
\label{eq:S}
\end{equation}
\rev{Equation~\eqref{eq:S} shows the gap-dependent \emph{residual} innovation term; the full one-step prediction variance additionally carries the state-uncertainty contribution $Z_{it}P^{-}_{it}Z_{it}^\top$ of the Kalman/GLS recursion (Web Appendix~B), which becomes negligible once the subject-specific effects are well estimated---the intercept-only regime of our application---so \eqref{eq:S} is the operative gap-dependent factor. Standardization and the calibration guarantee below use this full prediction variance, exactly as implemented.} The standardized score $R_{ijt}=\nu_{ijt}/\sqrt{S_{ijt}}$ down-weights longer gaps and up-weights recent information, so that ``today versus three months ago versus one year ago'' is judged on a comparable scale. \rev{\snew{We refer to this construction---the random intercept $b^{(0)}_{ij}$ and optional random slope $b^{(1)}_{ij}$ of \eqref{eq:layer}, together with the gap-scaled innovation variance \eqref{eq:S}---as the time-gap--aware \emph{layer}.} In practice the layer exposes two independent, user-selectable switches. The first is the mean model: it is intercept-only when most subjects contribute only a few visits (roughly one to four) and gains a random slope $b^{(1)}_{ij}h_{it}$ once most subjects have five or more visits, the point at which a subject-specific trend becomes reliably estimable. \snew{The figure of five visits is a practical rule of thumb rather than a sharp cutoff: with only a few visits the slope is barely separable from noise, so its random-effect BLUP shrinks essentially to zero and the model reduces to intercept-only regardless, which is why the precise threshold---three versus five versus ten---is not critical.} The second is the innovation model: the score is standardized by the gap-aware AR(1)/OU variance \eqref{eq:S} whenever serial correlation or longitudinal drift is plausible, and otherwise reduces to the i.i.d.\ case; because \eqref{eq:S} is evaluated at the actual gap $\Delta_{it}$, unequally spaced visits are handled automatically (the OU form is the continuous-time limit that applies when the gaps differ across visits). The two switches are set independently.} We bound $|\phi_j|\le0.8$ (or shrink $\lambda_j$) for numerical stability; the \snew{$p\times p$} healthy cross-marker covariance $\Sigma_R=\mathrm{Cov}_{\mathcal H}(R)$ \snew{(equivalently the cross-marker correlation, since the $R_{ijt}$ are standardized to unit variance)} is estimated \amk{by its healthy sample covariance, regularized by a small ridge on the diagonal for numerical stability}. The state-space (Kalman) recursions that produce $\nu_{ijt}$ and $S_{ijt}$ under irregular spacing are given in Web Appendix~B.

\subsection{Multi-marker composite and calibrated per-visit thresholds}
\label{sec:composite}
For a weight vector $w=(w_1,\dots,w_p)^\top$ with $\|w\|_2=1$, the composite score is $S_i(t;w)=\sum_j w_j R_{ijt}$. Under the healthy Gaussian working model $S_i(t;w)\sim\mathcal N(0,\,w^\top\Sigma_R w)$, giving a per-visit threshold
\begin{equation}
c(w,\alpha)=z_{\alpha}\sqrt{w^\top\Sigma_R w},
\label{eq:thr}
\end{equation}
so that a visit is flagged when $S_i(t;w)>c(w,\alpha)$. Because \eqref{eq:thr} rescales with $\sqrt{w^\top\Sigma_R w}$, the target specificity $\alpha$ is maintained for every visit and every gap (Proposition~\ref{prop:calib}). \snew{The threshold~\eqref{eq:thr} does not depend on $i$ or $t$: it is common across subjects and visits, and the per-subject, per-visit adaptivity is carried by the standardized score $R_{ijt}=\nu_{ijt}/\sqrt{S_{ijt}}$, so a single cutoff holds the target specificity for every subject and visit.} \amk{Expression~\eqref{eq:thr} is the threshold implied by the Gaussian working model; in practice, and throughout our simulations and application, we set the operating cutoff distribution-free as the empirical $\alpha$-quantile of the composite scores at the healthy \emph{training} visits, which coincides with \eqref{eq:thr} when the working model holds and is what the accompanying software implements.} \rev{Here $\Sigma_R$ is the pooled healthy innovation \emph{correlation}, which we take to be stable across visits so that it also serves as the per-visit correlation; when this correlation drifts with the gap, per-visit specificity holds on average and closely---though not exactly---per visit, as monitored by the stratified-specificity diagnostics of Web Appendix~A.}

The composite may equivalently be built by combining markers on the innovation scale before standardizing. Writing $\nu_t\in\mathbb R^p$ for the stacked innovations with covariance $S_t$ and $D_t=\mathrm{diag}(S_{t,11}^{1/2},\dots,S_{t,pp}^{1/2})$, so that $R_t=D_t^{-1}\nu_t$, the standardized composite $Z_R(t;w)=w^\top R_t/\sqrt{w^\top D_t^{-1}S_tD_t^{-1}w}$ equals $Z_C(t;a)=a^\top\nu_t/\sqrt{a^\top S_t a}$ for $a=D_t^{-\top}w$. Hence ``standardize-then-combine'' and ``combine-then-standardize'' yield identical decisions at any fixed $\alpha$; iPEB realizes this through two operational variants (Section~\ref{sec:weights}).

\subsection{Objective-driven weight optimization}
\label{sec:weights}
Let $W$ denote a pre-diagnosis window. Weights are chosen by minimizing, on a tuning split, the scalarized clinical loss
\begin{equation}
L(w)=-\operatorname{Sens}_W(w)+\lambda_1\{\alpha-\operatorname{Spec}(w)\}_{+}+\lambda_2\operatorname{Ref}(w)-\lambda_3\operatorname{LT}(w),
\label{eq:loss}
\end{equation}
where $\operatorname{Spec}(w)$ is the per-visit specificity among healthy visits; $\operatorname{Sens}_W(w)$ is the probability that a case is flagged at least once within its pre-diagnosis window $[T_i-W,T_i)$; $\operatorname{Ref}(w)=1-\operatorname{Spec}(w)$ is the per-visit referral (false-positive) rate; and $\operatorname{LT}(w)$ is the median lead time among detected cases. The penalties instantiate three clinical \emph{objectives}: \emph{sensitivity} (referral-controlled sensitivity, $\lambda_3=0$), \emph{lead time} ($\lambda_3>0$, rewarding earlier detection), and a \emph{combined} objective that additionally penalizes referral ($\lambda_2>0$). Because the loss retains the sensitivity term $-\operatorname{Sens}_W$ and the specificity floor under every objective, the lead-time objective extends detection lead \emph{while maintaining sensitivity} rather than maximizing lead time in isolation---which alone would trivially favor a single late-detected case; the loss and its behavior are detailed in Web Appendix~C. \rev{The practitioner selects one of these three objectives---the default is sensitivity---rather than tuning penalty weights: each objective corresponds to a \emph{fixed} weight profile, $(\lambda_1,\lambda_2,\lambda_3)=(50,0,0)$ for sensitivity, $(50,0,0.7)$ for lead time, and $(50,1,0.5)$ for the combined objective (with $\operatorname{LT}$ entered on a scale normalized by a two-year reference). Fixing the profiles keeps the input interpretable---a clinical goal rather than three opaque constants.} Because the threshold \eqref{eq:thr} is recalibrated to $\alpha$ for every candidate $w$, the specificity penalty is essentially inactive and $\lambda_3$ is the operative lead-time lever; ties are broken by preferring higher $\operatorname{Sens}_W$, then lower referral, then sparser weights. \rev{The window $W$ is an optional hyperparameter of the objective; by default it spans the entire pre-diagnostic period, so detection is scored over the whole trajectory---a case counts as detected if it crosses at any pre-diagnostic visit---in \emph{both} weight optimization and evaluation. All results in Sections~\ref{sec:sim}--\ref{sec:app} use this default, so no fixed detection window is imposed at any stage.}

Under the Gaussian working model the loss admits a closed-form minimizer. On the innovation scale a healthy visit has $R_{it}\sim\mathcal N(0,\Sigma_R)$ and a case in window has $R_{it}\sim\mathcal N(\mu_\Delta,\Sigma_R)$, where $\mu_\Delta$ is the case-versus-healthy mean shift. For $S_i(t;w)=w^\top R_{it}$ the detection probability at specificity $\alpha$ is $1-\Phi\{z_{\alpha}-\Lambda(w)\}$ with noncentrality $\Lambda(w)=w^\top\mu_\Delta/\sqrt{w^\top\Sigma_R w}$, so maximizing power is maximizing $\Lambda(w)$; the maximizer of this generalized Rayleigh quotient is
\begin{equation}
w^\star\propto\Sigma_R^{-1}\mu_\Delta,
\label{eq:wstar}
\end{equation}
normalized so that $w^{\star\top}\Sigma_R w^\star=1$ (and projected onto the nonnegative orthant if required). iPEB realizes \eqref{eq:wstar} through two variants that share this loss and thresholding and correspond to the two equivalent orderings of Section~\ref{sec:composite}. The \emph{scalar} combiner (iPEB-S) first standardizes each marker by its own innovation variance, $R_{ijt}=\nu_{ijt}/\sqrt{S_{ijt}}$, and forms
\begin{equation}
Z_{\mathrm S}(t;w)=\frac{w^\top R_{it}}{\sqrt{w^\top\Sigma_R w}},\qquad w\propto\Sigma_R^{-1}\mu_\Delta,
\label{eq:ipebS}
\end{equation}
which needs only marginal innovation variances and the healthy correlation $\Sigma_R$, and is therefore cheap and robust. The \emph{multivariate} combiner (iPEB-M) instead whitens the markers jointly on the innovation scale,
\begin{equation}
Z_{\mathrm M}(t;a)=\frac{a^\top\nu_{it}}{\sqrt{a^\top S_t a}},\qquad a\propto S_t^{-1}m_\Delta,\quad m_\Delta=D_t\mu_\Delta,
\label{eq:ipebM}
\end{equation}
where $m_\Delta=D_t\mu_\Delta$ is the case mean shift of $\nu_{it}$ on the raw innovation scale; it exploits the full cross-marker innovation covariance $S_t$ but requires its stable estimation. By the equivalence of Section~\ref{sec:composite} the two coincide when $S_t$ is correctly specified ($a=D_t^{-\top}w$); in finite samples they differ, so iPEB fits both variants on the training data, \snew{retains the one with the higher validation sensitivity at the target specificity---a single variant-selection criterion applied across all three objectives---}and refits it on the full training set before locking it. \snew{This criterion is deliberate rather than incidental. The two combiners are population-equivalent and differ only in how robustly they estimate the innovation covariance in finite samples, so choosing between them is an estimation-robustness decision rather than an objective-alignment one---the objective is already imposed when each candidate's weights are fitted. We select on sensitivity at the target specificity because it is computed from all cases and is therefore low-variance, whereas lead time on the small validation slice is estimated only from the few detected cases and is too noisy to select on reliably.} Weights are signed---a marker may enter negatively when informative only after adjustment---and unit-normalized; nonnegativity is optional. When empirical clinical metrics are optimized directly the objective is non-convex, and we use scalarization of \eqref{eq:loss} with the tie-breakers above, or population-based search for a Pareto frontier \citep{Deb2002}; a multi-window robust extension is a second-order cone program and is derived in Web Appendix~C. \rev{When a smaller panel is preferred, iPEB optionally performs \emph{feature selection} by objective-driven backward elimination: starting from all markers, at each step it drops the marker whose removal least degrades the validation objective, continuing until a target panel size is reached, and then re-optimizes the weights on the retained markers; the number of markers to keep is a user choice. We illustrate this in the selected-four comparison of Section~\ref{sec:app}, where iPEB reduces six markers to four.} The complete procedure is summarized in Algorithm~1.

\begin{center}
\fbox{\begin{minipage}{\linewidth}\setstretch{1}
\textbf{Algorithm 1 (iPEB).}
\emph{Required input:} longitudinal panels $Y_{ijt}$, labels $(\mathcal H,\mathcal C)$ and diagnosis times $T_i$; a target specificity $\alpha$; a clinical objective (sensitivity, lead time, or combined); and a train/tune/test split. \rev{\emph{Optional input:} a detection window $W$ (default: the whole pre-diagnostic trajectory), a target panel size $q<p$ (which turns on feature selection); and the layer configuration---a random slope on or off, and i.i.d.\ versus AR(1)/OU innovations (Section~\ref{sec:layer}). The chosen objective fixes the penalty profile $(\lambda_1,\lambda_2,\lambda_3)$ (Section~\ref{sec:weights}).}\\
\emph{Step 1 (layer).} On healthy training subjects, fit \eqref{eq:layer} per marker; compute innovations $\nu_{ijt}$ and gap-dependent variances $S_{ijt}$ from history $\le t-1$; form $R_{ijt}$ and estimate $\Sigma_R$ \amk{by its regularized healthy sample covariance}.\\
\emph{Step 2 (score and threshold).} For candidate $w$, set $S_i(t;w)=\sum_j w_jR_{ijt}$ and threshold $c(w,\alpha)=z_{\alpha}\sqrt{w^\top\Sigma_R w}$.\\
\emph{Step 3 (optimize weights).} On the tuning split, evaluate $\operatorname{Sens}_W$, specificity, referral, and lead time; minimize \eqref{eq:loss} (closed form \eqref{eq:wstar}, or search) for both the scalar and the multivariate variant.\\
\rev{\emph{Step 3a (optional feature selection).} If a target size $q<p$ is requested, backward-eliminate markers---at each step dropping the marker whose removal least degrades the validation objective---until $q$ remain, then re-optimize the weights on the retained markers.}\\
\emph{Step 4 (select and lock).} \snew{Keep the variant with the higher validation $\operatorname{Sens}_W$ (sensitivity at the target specificity), used as a common variant-selection criterion across all objectives}; refit on the full training set and lock the weights, panel, and threshold.\\
\emph{Step 5 (evaluate).} Evaluate once on the held-out test data; \rev{optionally summarize agreement between methods by a subject-level cross-classification of detected cases, or attach uncertainty by a subject-level bootstrap.}
\end{minipage}}
\end{center}

With an intercept-only i.i.d.\ layer and a single-marker weight, Algorithm~1 reduces exactly to marginal PEB.

\section{Theoretical Properties}
\label{sec:theory}
\amk{The three propositions below are, respectively, classical calibration, minimum-variance prediction, and Neyman--Pearson optimality results; the contribution is that each continues to hold for the time-gap--aware layer under irregular visit spacing.} We state analogues, for the time-gap--aware layer, of the marginal-PEB properties of \citet{mcintosh2003parametric}. Throughout, $R_{ijt}=\nu_{ijt}/\sqrt{S_{ijt}}$ is the standardized innovation from the Kalman filter/GLS, $\Sigma_R$ is the healthy cross-marker covariance, $c(w,\alpha)=z_{\alpha}\sqrt{w^\top\Sigma_R w}$, and $\mathcal F_{t-1}$ is the history up to the previous visit. Proofs are in Web Appendix~A.

\begin{proposition}[Calibration]
\label{prop:calib}
Under the healthy model, for any $i,t$ and gap $\Delta_{it}>0$, $\mathbb E_{\mathcal H}[R_{ijt}\mid\mathcal F_{t-1}]=0$, $\mathrm{Var}_{\mathcal H}(R_{ijt}\mid\mathcal F_{t-1})=1$, and $R_{ijt}\sim\mathcal N(0,1)$. Consequently $\Pr_{\mathcal H}\{S_i(t;w)\le c(w,\alpha)\}=\alpha$ for every $t$ and every $\Delta_{it}$; that is, the per-visit specificity equals $\alpha$ (and the per-visit false-positive rate $1-\alpha$).
\end{proposition}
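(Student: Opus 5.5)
The plan is to treat the healthy model \eqref{eq:layer} as a linear Gaussian state-space model. The state for subject $i$ and marker $j$ collects the random effects $(b^{(0)}_{ij},b^{(1)}_{ij})$ and the AR(1)/OU residual $\varepsilon_{ijt}$. Over a gap $\Delta_{it}$ the residual evolves by the transition $\varepsilon_{t}=\phi_j^{\Delta_{it}}\varepsilon_{t-1}+\eta_t$ (respectively $e^{-\lambda_j\Delta_{it}}$ for OU). The innovation variance is $\sigma_j^2(1-\phi_j^{2\Delta_{it}})$, which is exactly the gap-dependent term in \eqref{eq:S}. The covariates $x_{it}$ are $\mathcal F_{t-1}$-measurable, or at least known at visit $t$ and independent of the noise, so they enter only through the mean. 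Because everything is jointly Gaussian, the conditional law of $Y_{ijt}$ given $\mathcal F_{t-1}$ is Gaussian. Its mean is the Kalman/GLS one-step predictor $\mu_j+x_{it}^\top\beta_j+\widehat{b^{(0)}_{ij}}(t)+\widehat{b^{(1)}_{ij}}(t)h_{it}$, including the propagated residual prediction. Its variance is the full one-step prediction variance $Z_{it}P^-_{it}Z_{it}^\top+\sigma_j^2(1-\phi_j^{2\Delta_{it}})$, which is what the paper denotes $S_{ijt}$ under the convention stated after \eqref{eq:S}.

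\textbf{Step 1 (marginal calibration).} By construction $\nu_{ijt}=Y_{ijt}-\mathbb E[Y_{ijt}\mid\mathcal F_{t-1}]$. Hence $\mathbb E[\nu_{ijt}\mid\mathcal F_{t-1}]=0$ and $\mathrm{Var}(\nu_{ijt}\mid\mathcal F_{t-1})=S_{ijt}$. Since $S_{ijt}$ is $\mathcal F_{t-1}$-measurable (it depends only on the gap and the filter covariance, not on data values), dividing by $\sqrt{S_{ijt}}$ gives $R_{ijt}\mid\mathcal F_{t-1}\sim\mathcal N(0,1)$. This conditional law does not depend on $\mathcal F_{t-1}$, so integrating it out yields the unconditional $R_{ijt}\sim\mathcal N(0,1)$. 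The argument holds for every $\Delta_{it}>0$ because the gap enters only through the transition coefficient, which the filter uses exactly. I will state explicitly that population parameters $\theta_j$ are used, since with plug-in estimates the result holds only asymptotically.

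\textbf{Step 2 (composite).} Stacking markers, the vector $R_{it}$ given $\mathcal F_{t-1}$ is jointly Gaussian with mean zero. It equals $D_t^{-1}$ times a Gaussian innovation vector, with covariance $D_t^{-1}S_tD_t^{-1}$. Under the paper's standing assumption that this correlation matrix equals $\Sigma_R$ for all visits, $S_i(t;w)=w^\top R_{it}\sim\mathcal N(0,w^\top\Sigma_Rw)$. Therefore
\[
\Pr\{S_i(t;w)\le z_\alpha\sqrt{w^\top\Sigma_Rw}\}=\Phi(z_\alpha)=\alpha,
\]
taking $z_\alpha$ to be the $\alpha$-quantile of $\mathcal N(0,1)$. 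This holds conditionally on $\mathcal F_{t-1}$, and hence for every $t$ and gap.

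\textbf{Main obstacle.} The delicate point is Step 2's cross-marker covariance. The conditional correlation of the innovations could in principle depend on $\Delta_{it}$, for example if the markers have different $\phi_j$ or cross-correlated states. In that case exact per-visit calibration requires this correlation to be gap-invariant. I plan to make that an explicit hypothesis of the healthy model, matching the remark in Section~\ref{sec:composite}. Alternatively, I would note that for a single marker, or for $w=e_j$, no such assumption is needed.
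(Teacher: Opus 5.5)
Your proposal is correct and follows the paper's own proof. Both use the Kalman prediction-error decomposition to get $\nu_{ijt}\mid\mathcal F_{t-1}\sim\mathcal N(0,S_{ijt})$, with $S_{ijt}$ the full one-step filter variance, then standardize and marginalize to obtain $R_{ijt}\sim\mathcal N(0,1)$; the composite claim then follows from $w^\top R_{it}\sim\mathcal N(0,w^\top\Sigma_R w)$ once the cross-marker innovation correlation is gap-invariant, so that $D_t^{-1}S_tD_t^{-1}=\Sigma_R$ at every visit, which is exactly the assumption the paper invokes parenthetically and that you propose to state explicitly.
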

\noindent\emph{Proof idea.} The Kalman prediction-error decomposition gives $\nu_{ijt}\mid\mathcal F_{t-1}\sim\mathcal N(0,S_{ijt})$ with $S_{ijt}$ a deterministic function of the gap $\Delta_{it}$; standardizing yields $\mathcal N(0,1)$, and linearity of the composite gives the specificity claim. Full proof in Web Appendix~A.

\begin{proposition}[Minimum-variance prediction]
\label{prop:mvp}
The \rev{time-gap--aware} empirical Bayes (Kalman/BLUP) one-step predictor $\widehat m_{ijt}$ minimizes prediction mean-squared error among linear unbiased predictors using the same history; hence the numerator of $R_{ijt}$ has minimum variance in that class. \rev{The gap-ignorant mean-type baselines---the subject mean, last observation carried forward, or the i.i.d.\ empirical Bayes predictor---are (marginally) linear predictors built on a misspecified or degenerate covariance, and incur no smaller prediction MSE in the model-based settings we consider.}
\end{proposition}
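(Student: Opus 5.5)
We would prove Proposition~\ref{prop:mvp} by casting the time-gap--aware layer \eqref{eq:layer} as a jointly Gaussian linear model and then applying the classical best-linear-unbiased-prediction (BLUP) theorem, together with its equivalence to the Kalman filter.

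\emph{Setting.} Fix subject $i$ and marker $j$, and stack the past observations into a vector $Y_{ij,1:t-1}$. Under the healthy model its distribution is
\[
Y_{ij,1:t-1}=X\beta^\ast+Z b_{ij}+\varepsilon,
\]
where $\beta^\ast$ collects $\mu_j$ and $\beta_j$. The noise $\varepsilon$ has the AR(1)/OU covariance $V$. Its entries are $\sigma_j^2\phi_j^{|h_{is}-h_{ir}|}$, or $\sigma_j^2e^{-\lambda_j|h_{is}-h_{ir}|}$ in the OU case. These are evaluated at the actual visit times, so irregular gaps enter only through $V$ and the cross-covariance with the target.

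\emph{Target.} The quantity to predict is $Y_{ijt}=x_{it}^\top\beta^\ast+z_{it}^\top b_{ij}+\varepsilon_{ijt}$. Its covariance with the history is
\[
c_t=\mathrm{Cov}(Y_{ijt},Y_{ij,1:t-1})=Z D_j z_{it}+v_t,
\]
where $v_t$ is the vector of residual correlations across the gaps.

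\emph{Linear unbiased class.} A predictor $\ell^\top Y_{ij,1:t-1}$ is unbiased for $Y_{ijt}$ for every $\beta^\ast$ exactly when $X^\top\ell=x_{it}$.

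\emph{Step 1: minimize MSE over the class.} Under that constraint, minimize
\[
\mathrm{MSE}(\ell)=\ell^\top\Sigma\ell-2\ell^\top c_t+\mathrm{Var}(Y_{ijt}),\qquad \Sigma=ZD_jZ^\top+V.
\]
The Lagrangian argument (Goldberger, Henderson) gives the unique minimizer
\[
\widehat m_{ijt}=x_{it}^\top\hat\beta_{\rm GLS}+c_t^\top\Sigma^{-1}\bigl(Y-X\hat\beta_{\rm GLS}\bigr).
\]
The minimum is attained because the objective is a strictly convex quadratic on an affine set, given that $\Sigma\succ0$. For any other $\ell$ in the class, $\ell-\ell^\ast$ is orthogonal to $\ell^\ast$ in the relevant inner product. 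This yields
\[
\mathrm{MSE}(\ell)=\mathrm{MSE}(\ell^\ast)+(\ell-\ell^\ast)^\top\Sigma(\ell-\ell^\ast).
\]

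\emph{Step 2: identify with the paper's predictor.} We would cite the standard result that the Kalman/GLS recursion of Web Appendix~B, run with a diffuse prior on $\beta^\ast$ and the state vector $(b^{(0)},b^{(1)},\varepsilon_{t})$ with gap-dependent transition $\phi_j^{\Delta_{it}}$, produces exactly this BLUP. Its prediction-error variance is the full $S_{ijt}$ of Section~\ref{sec:layer}, namely \eqref{eq:S} plus the state-uncertainty term. Since the numerator of $R_{ijt}$ is $\nu_{ijt}=Y_{ijt}-\widehat m_{ijt}$, minimum MSE of $\widehat m_{ijt}$ is equivalent to minimum variance of the numerator, because $\nu_{ijt}$ has mean zero. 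Consistent with Proposition~\ref{prop:calib}, we treat the variance parameters as known, i.e.\ population BLUP; plug-in estimation is an EBLUP caveat.

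\emph{Step 3: the baselines.} The subject mean, LOCF and the i.i.d.\ empirical Bayes predictor are each of the form $\tilde\ell^\top Y$ with $X^\top\tilde\ell=x_{it}$. This holds at least in the intercept-only case, where $X$ is a column of ones and each weight vector sums to one; with covariates, one first applies them to the covariate-adjusted series. Each baseline therefore lies in the unbiased class, and the decomposition in Step~1 gives
\[
\mathrm{MSE}(\tilde\ell)\ge \mathrm{MSE}(\ell^\ast),
\]
with equality iff $\tilde\ell=\ell^\ast$. The i.i.d.\ EB predictor is the BLUP computed under the wrong $V$, namely $\sigma^2I$. It coincides with $\ell^\ast$ only when $\phi_j=0$ or the gaps make $v_t$ vanish, which explains the ``no smaller'' phrasing.

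\emph{Main obstacle.} The delicate point is the unbiasedness membership claim for the baselines, namely the ``marginally'' qualifier. When $X$ contains time-varying covariates or a random slope, the raw subject mean or LOCF is not unbiased for $x_{it}^\top\beta^\ast$. We would handle this by restricting the comparison to covariate-adjusted versions, i.e.\ applying the baselines to $Y-X\hat\beta$. Alternatively, we would compare with MSE over the random-effects distribution in the intercept-only regime used in the application, where the claim is clean.
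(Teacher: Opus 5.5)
Your route is the paper's: the Gauss--Markov/BLUP theorem for correlated errors, the Kalman filter as a recursive form of that BLUP, and the baselines as members of the same class built on the wrong covariance. Your Lagrangian and the identity $\mathrm{MSE}(\ell)=\mathrm{MSE}(\ell^\ast)+(\ell-\ell^\ast)^\top\Sigma(\ell-\ell^\ast)$ make explicit what the paper leaves to ``Gauss--Markov''.

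The gap is in the class itself. The paper's sketch never says from which data the fixed effects are estimated, and your explicit choice is the one that fails. By stacking only subject $i$'s past and requiring $X^\top\ell=x_{it}$ for every $\beta^\ast$ (which contains $\mu_j$), you re-estimate the fixed effects from that subject's own history.

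In the intercept-only regime you call clean, $\mathrm{col}(Z)\subseteq\mathrm{col}(X)$. Woodbury then reduces your predictor to $x_{it}^\top\hat\beta_V+v_t^\top V^{-1}(Y-X\hat\beta_V)$, where $\hat\beta_V$ is the GLS estimate under $V$ alone, so $D_j$ (i.e.\ $\tau_j^2$) drops out entirely. That is not the empirical Bayes predictor $\widehat m_{ijt}$ of \eqref{eq:innov}. That predictor anchors at the pooled healthy $\hat\mu_j,\hat\beta_j$, shrinks the subject's deviation through $D_j$, and is the conditional mean invoked in the proof of Proposition~\ref{prop:calib}.

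The same mismatch breaks Step~3 for PEB. The PEB predictor is $\hat\mu_j+(1-B)(\bar Y_{ij}-\hat\mu_j)$ with $B=\sigma_j^2/(\sigma_j^2+k\tau_j^2)$ for $k$ prior draws. It puts total weight $1-B<1$ on the subject's draws, so it violates $\mathbf 1^\top\tilde\ell=1$. Your ``BLUP under $\sigma^2 I$'' is instead just the subject mean in the covariate-free case. The statement's two distinct baselines (subject mean and i.i.d.\ EB) therefore collapse into one, and PEB is never actually compared.

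The repair is cheap, and there are two ways to do it.
\begin{itemize}
\item \emph{Enlarge the conditioning vector} to $(Y_{\mathrm{ref}}^\top,Y_{ij,1:t-1}^\top)^\top$, including the healthy reference data. The covariance becomes $\mathrm{diag}(\Sigma_{\mathrm{ref}},\Sigma)$, and the cross-covariance with $Y_{ijt}$ vanishes on the reference block. Your Step~1 formula then becomes $x_{it}^\top\tilde\beta+c_t^\top\Sigma^{-1}(Y_{ij,1:t-1}-X\tilde\beta)$, with $\tilde\beta$ the pooled GLS estimate. This is the form of \eqref{eq:innov}, now with genuine $D_j$-driven shrinkage. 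PEB (pooled $\hat\mu_j$ plus shrinkage, linear once the variance parameters are fixed) joins the subject mean and LOCF in the class under the same conditions you impose on those two, and your Pythagorean identity applies verbatim.
\item \emph{Treat $(\mu_j,\beta_j)$ as known}, as Proposition~\ref{prop:calib} does. The Kalman filter initialized at the random-effects law $\mathcal N(0,D_j)$ returns $\mathrm E[Y_{ijt}\mid\mathcal F_{t-1}]$. This has the smallest MSE among \emph{all} functions of the history, so the baselines need not be unbiased and your ``main obstacle'' disappears.
\end{itemize}

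Two smaller corrections. First, a random slope does not threaten unbiasedness of the subject mean or LOCF: $b_{ij}$ has mean zero and only the fixed design enters $X^\top\ell=x_{it}$. Only the time-varying covariates (age, calendar year) do. Second, in Step~1 the orthogonality is between the BLUP error and every $d^\top Y$ with $X^\top d=0$, i.e.\ $d^\top(\Sigma\ell^\ast-c_t)=0$. It is not between $\ell-\ell^\ast$ and $\ell^\ast$.
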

\noindent\emph{Proof idea.} Given the correct second-order structure, the Kalman/BLUP predictor is the generalized-least-squares projection of the new observation onto the history, which is minimum-variance linear unbiased by Gauss--Markov; gap-ignorant mean-type predictors use a misspecified (or degenerate) covariance and so cannot improve on it. Full proof in Web Appendix~A.

\begin{proposition}[Power advantage over gap-ignorant rules]
\label{prop:power}
For a case alternative that shifts the mean by $\delta_j>0$ on a marker subset within the pre-diagnostic window \rev{while leaving the healthy innovation covariance $\Sigma_R$ unchanged}, and fixed per-visit specificity $\alpha$ for every gap, the Neyman--Pearson most powerful test thresholds a linear form of $R_{ijt}$ (univariate $R_{ijt}$; multivariate $w^\top R_{it}$ with $w\propto\Sigma_R^{-1}\mu_\Delta$). Any statistic that does not standardize by $\sqrt{S_{ijt}}$ cannot be most powerful for all $\Delta_{it}$ simultaneously; hence the time-gap--aware rule has weakly higher true-positive rate at fixed specificity, and strictly higher whenever $\Delta_{it}$ varies.
\end{proposition}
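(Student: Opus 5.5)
We prove Proposition~\ref{prop:power} by reducing each visit to a Gaussian location test and then applying Neyman--Pearson visit by visit. Fix a subject $i$ and visit $t$, and condition on $\mathcal F_{t-1}$. By Proposition~\ref{prop:calib}, under $\mathcal H$ the stacked standardized innovations satisfy $R_{it}\mid\mathcal F_{t-1}\sim\mathcal N(0,\Sigma_R)$, where $\Sigma_R$ is the healthy correlation; this distribution is free of $\Delta_{it}$. Under the case alternative the mean of $Y_{ijt}$ shifts by $\delta_j$ while the covariance is unchanged. The filter's predictor uses only $\mathcal F_{t-1}$, so the innovation mean shifts by $\delta_j$ minus the part already absorbed into the history. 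Dividing by $\sqrt{S_{ijt}}$, the conditional law of $R_{it}$ under the alternative is $\mathcal N(\mu_\Delta,\Sigma_R)$. Here $\mu_{\Delta,j}=\delta^{\mathrm{eff}}_{j}/\sqrt{S_{ijt}}$, and $\delta^{\mathrm{eff}}_{j}$ is the effective shift. Treating this as a simple-versus-simple problem at each visit, the likelihood ratio is $\exp\{\mu_\Delta^\top\Sigma_R^{-1}R_{it}-\tfrac12\mu_\Delta^\top\Sigma_R^{-1}\mu_\Delta\}$. This is monotone in $w^\top R_{it}$ with $w\propto\Sigma_R^{-1}\mu_\Delta$. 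The Neyman--Pearson lemma then shows that the size-$(1-\alpha)$ most powerful test rejects when $w^\top R_{it}>z_\alpha\sqrt{w^\top\Sigma_R w}=c(w,\alpha)$, which is exactly the iPEB rule with the weights of \eqref{eq:wstar}. Its power is $1-\Phi\{z_\alpha-\Lambda(w^\star)\}$. In the univariate case this reduces to thresholding $R_{ijt}$ itself, because the sign of $\delta_j>0$ fixes the direction.

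Next we show that no gap-ignorant statistic can match this power uniformly in the gap. Take such a rule: a statistic $T_{it}=g(\nu_{it})$ with a threshold chosen once, independent of $\Delta_{it}$. For example, $T=\sum_j a_j\nu_{ijt}/\hat\sigma_j$ with fixed $a$ and fixed $\hat\sigma_j$. Under $\mathcal H$, $\nu_{it}\sim\mathcal N(0,S_t(\Delta))$, and by \eqref{eq:S} this covariance varies strictly with $\Delta$ whenever $|\phi_j|<1$ (or $\lambda_j>0$). A threshold that does not depend on the gap therefore gives a per-visit size that depends on $\Delta$. It cannot equal $1-\alpha$ at two distinct gaps unless the rule implicitly rescales by $\sqrt{S_{ijt}}$. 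So either the rule violates the fixed-specificity constraint at some gap, and is inadmissible under the hypothesis, or, if it is recalibrated per gap, its rejection region must coincide with the Neyman--Pearson region for every $\Delta$. The most powerful test is unique up to null sets. Its region $\{w^\top D_t^{-1}\nu_t>c\}$ varies with $\Delta$ through $D_t$, where $D_t=\mathrm{diag}(S_{t,jj}^{1/2})$ as in Section~\ref{sec:composite}. A fixed region in $\nu$-space can therefore agree with it at most on a set of gaps with a common $S_t$. This yields the ``cannot be most powerful for all $\Delta_{it}$ simultaneously'' statement.

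To aggregate, write the true-positive rate of any rule as the mixture $\int \mathrm{Power}(\Delta)\,dF(\Delta)$ over the empirical gap distribution. At each $\Delta$, the iPEB power is at least that of any competitor meeting the same per-visit size, by Neyman--Pearson. This gives weakly higher true-positive rate. For strictness, when $\Delta_{it}$ takes at least two values with positive probability, the competitor fails to be most powerful on a set of gaps of positive mass. Uniqueness of the most powerful test then gives strict inequality there, and hence strictly higher true-positive rate.

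The main obstacle is the second step: making ``does not standardize by $\sqrt{S_{ijt}}$'' precise enough to use uniqueness of the Neyman--Pearson region. I would define gap-ignorant rules as those measurable with respect to $\nu_{it}$ with a gap-invariant rejection region. I would then verify that $\Delta\mapsto D_t(\Delta)$ is injective on the relevant range, using strict monotonicity of $1-\phi^{2\Delta}$ for $0<|\phi_j|\le0.8$. A secondary technical point is that $\mu_\Delta$ itself depends on $\Delta$ through $\delta^{\mathrm{eff}}$. I would handle this by assuming, as the statement implicitly does, that the shift is known in direction, so that the Neyman--Pearson test at each gap is the linear form with the appropriate $w$.
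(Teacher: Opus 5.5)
Your plan follows the paper's route: per-gap Neyman--Pearson on $R_{it}$, then non-coincidence of a gap-ignorant statistic with that region, then averaging over the gap distribution. However, the comparison step you flag as the main obstacle does not go through, and the paper's own argument makes the same leap. Your aggregation compares iPEB only with rules whose per-visit size is $1-\alpha$ at \emph{every} gap. A rule with a gap-invariant rejection region in $\nu$-space (your proposed definition of ``gap-ignorant'') cannot meet that constraint, as you note yourself. So Neyman--Pearson at a fixed gap says nothing about it: at gaps where its false-positive rate exceeds $1-\alpha$ it can be more powerful, and calling it inadmissible removes it from the comparison rather than beating it. Conversely, a non-standardized statistic whose cut is recalibrated per gap has no fixed region, so your uniqueness sentence does not touch it. In the univariate case $\{\nu_{ijt}>z_\alpha\sqrt{S_{ijt}}\}$ \emph{is} $\{R_{ijt}>z_\alpha\}$, the most powerful test at every gap. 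For this class, both ``cannot be most powerful for all $\Delta_{it}$'' and the strict inequality are therefore false. In the multivariate case, a recalibrated fixed direction $a^\top\nu_{it}$ fails at some gap only if the innovation variances change non-proportionally across markers as the gap changes. With a common $\phi_j$ in $\sigma_j^2(1-\phi_j^{2\Delta})$, for instance, $D_t$ is a scalar multiple of a fixed diagonal matrix and the condition fails. This requirement is stronger than the injectivity of $\Delta\mapsto D_t(\Delta)$ that you intend to check.

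The alternative needs repair too. With a fixed raw shift, your $\mu_{\Delta,j}=\delta^{\mathrm{eff}}_j/\sqrt{S_{ijt}}$ moves with the gap through $D_t^{-1}$. In general the direction $\Sigma_R^{-1}\mu_\Delta$ then moves as well, so the single locked $w$ of the closed form is not most powerful at every gap, and assuming the shift is ``known in direction'' does not fix this. You need the assumption the paper states in its proof idea, that the \emph{standardized} shift $\mu_\Delta$ is stable across gaps, plus a gap distribution that is the same for cases and controls. Then close the argument with Neyman--Pearson on the joint law of $(\Delta_{it},R_{it})$ at matched \emph{pooled} specificity. The joint likelihood ratio $\exp\{\mu_\Delta^\top\Sigma_R^{-1}R_{it}-\tfrac12\mu_\Delta^\top\Sigma_R^{-1}\mu_\Delta\}$ does not involve $\Delta_{it}$. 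Hence a single cut on $w^\top R_{it}$ is most powerful among \emph{all} rules with pooled false-positive rate $1-\alpha$, and by the calibration proposition it also has exact per-gap size. A gap-ignorant region $\{a^\top\nu_{it}>c\}=\{(D_t a)^\top R_{it}>c\}$ coincides with it at a gap only if $D_t a=k\,w$ and $c=k\,z_\alpha\sqrt{w^\top\Sigma_R w}$ for some $k>0$. For $\alpha\neq\tfrac12$ this fixes $k$ across gaps, and hence forces $D_t$ to be constant on the support of $a$. Whenever those variances vary with the gap on a set of positive probability, uniqueness of the Neyman--Pearson region gives the strict inequality. Per-gap-recalibrated but unstandardized rules must be excluded from the claim, since in the univariate case they are the standardized rule.
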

\noindent\emph{Proof idea.} For each fixed gap the common-covariance Gaussian innovation family has monotone likelihood ratio, so Neyman--Pearson thresholds $w^\top R_{it}$ with $w\propto\Sigma_R^{-1}\mu_\Delta$ \rev{(for a standardized shift $\mu_\Delta$ that is stable across gaps)}; a rule that omits the factor $1/\sqrt{S_{ijt}}$ is a different, suboptimal linear statistic for at least one gap and cannot be most powerful at all gaps at once, so averaging over the empirical gap distribution the standardized rule dominates. Full proof in Web Appendix~A.

Proposition~\ref{prop:calib} guarantees that the operating specificity is honored under irregular schedules; Proposition~\ref{prop:mvp} identifies the layer as optimal among linear predictors; and Proposition~\ref{prop:power} shows that ignoring the gap forfeits power. The results are model-based (Gaussianity, correctly specified mean and variance structure); under heavy tails or change-points, calibration holds approximately and the advantages attenuate, so innovation quantile--quantile plots, residual autocorrelation, and stratified-specificity diagnostics are recommended (Web Appendix~A).

\section{Simulation Studies}
\label{sec:sim}
\amk{We evaluate iPEB in two complementary simulation studies. The first vets the robustness of iPEB across varying dimension, correlation, drift sparsity, and distributional misspecification, and their combinations. The second quantifies the advantages of iPEB through mechanistic checks that isolate the benefits of objective-driven weighting, lead-time optimization, and the time-gap--aware layer.}

\subsection{\amk{Simulation study designs}}
\label{sec:sim-design}

\smallskip\noindent\textbf{Factorial study.} We simulated longitudinal panels with healthy random effects and case-specific preclinical drift---an unequal-magnitude linear ramp over the final months before diagnosis on a subset $\mathcal D$ of ``driver'' markers---so that cases and controls share an identical baseline distribution and diverge only within the window, mirroring a screening cohort. \amk{We varied the design across four main factors: (i) panel dimension ($p\in\{5,10\}$); (ii) correlation strength between markers (low $0.2^{|j-k|}$, high $0.7^{|j-k|}$, or two-block); (iii) drift sparsity, the proportion of signal-carrying markers ($|\mathcal D|/p\in\{0.4,0.6\}$, with unequal effects); and (iv) distributional misspecification (none, $t_5$, AR(1), or log-normal). A full factorial over these axes yields 48 unique settings, each averaged over 100 Monte Carlo cohorts. Each simulated cohort comprises $800$ controls and $200$ cases.} \rev{Within every cohort, the iPEB weights, the comparators, and the specificity threshold were learned on a random \amk{$75\%$} training sample of subjects (cases and controls)---within which a further $25\%$ validation slice selected the scalar-versus-multivariate variant---and all metrics evaluated on the held-out \amk{$25\%$}, so no subject used for fitting was reused for evaluation.} Comparators---the best marginal PEB marker, the equal-weight composite, and the first principal component (PCA-1)---used identical PEB thresholding, isolating the effect of weighting. \rev{Because each subject contributes six visits (the $\ge5$-visit regime), iPEB and all comparators use a random-slope layer throughout the factorial, and visit times are drawn irregularly per subject as a mixture of semiannual and annual gaps; the time-gap--aware AR(1)/OU innovation is enabled only in the AR(1)-misspecification cells---where the generative process carries serial correlation---and reduces to i.i.d.\ elsewhere. The same layer is shared by iPEB and every comparator, so the study still isolates the objective-driven weighting. The simulated panels carry no covariates, so the covariate-adjustment term $x_{it}^\top\beta_j$ is exercised only in the real-data application.} \snew{These axes span regimes that favor and disfavor a fixed composite alike, rather than only settings chosen to suit iPEB:} higher dimension and marker correlation force a combiner to balance redundant signal, which whitening by $\Sigma_R^{-1}$ exploits, while the misspecification axis probes robustness of the Gaussian working model, with log-normal noise the adversarial regime in which a single dominant marginal marker can outperform any composite. Because the factorial imposes \emph{synchronous} drift across markers, it carries no cross-marker timing heterogeneity for a combiner to exploit, and median lead time among detected cases is confounded by detection rate; lead time is therefore assessed in the differential-timing scenario below, and the factorial reports \rev{per-patient sensitivity under whole-trajectory detection} (iPEB under the sensitivity objective) at $95\%$ \rev{per-visit} specificity. Full design details and the per-cell tables are in Web Appendix~D and Web Tables~1--2.

\smallskip\noindent\textbf{Mechanism study.} \rev{\snew{Four} focused scenarios, each over 200 cohorts at $95\%$ specificity, isolate the mechanisms (Table~\ref{tab:focused}, Figures~\ref{fig:s2}--\ref{fig:s3}; the Simulation~1 mean ROC is Web Figure~1\snew{, and the Simulation~4 sensitivity comparison Web Figure~2}); detection uses the whole pre-diagnostic trajectory (Section~\ref{sec:app-design}). \rev{S1 and S2 place every subject on a regular visit grid (eight and ten visits) and, being in the $\ge5$-visit regime, fit a random slope with i.i.d.\ innovations---these designs carry no longitudinal trend or serial correlation, so AR(1)/OU is not applied---and the comparison isolates the objective-driven \emph{weighting}, since iPEB and the fixed panel share the same layer. S3 keeps the regular grid but injects genuine healthy drift and serial correlation, and contrasts an intercept-only i.i.d.\ layer against the random-slope-plus-OU layer to isolate the \emph{layer} itself; each subject contributes eight visits so the slope and OU rate are well estimated.}} \snew{Scenario~4 repeats this S3 mechanism (healthy drift with serial correlation) on an \emph{irregularly} spaced visit grid, with inter-visit gaps ranging from about three months to two years\amk{, contrasting the gap-ignorant PEB (intercept-only, i.i.d.) with the gap-aware iPEB (random slope, OU)}.}

\subsection{\amk{Results}}
\label{sec:sim-results}

\smallskip\noindent\textbf{Factorial study.} \rev{Across the 48-cell factorial \snew{simulation} (Figure~\ref{fig:ext}), iPEB attained the highest mean sensitivity overall (\amk{$0.342$} versus $0.316$ best-marginal, $0.308$ equal-weight, $0.305$ PCA-1), exceeding the best single-marker benchmark in $39$ of $48$ cells, the equal-weight composite in \amk{$45$}, and \amk{PCA-1 in $45$}. \amk{Throughout, a method ``wins'' a cell if its mean sensitivity over that cell's 100 cohorts is strictly higher than the comparator's.} Margins widened under correlation and higher dimension---where fixed linear combinations are least efficient (for example, $p=10$, high correlation, sparsity $0.6$, no misspecification: iPEB \amk{$0.372$} versus $0.304$ equal-weight, $0.307$ PCA-1, $0.318$ best-marginal). All $9$ cells in which iPEB did not exceed the best single marker were log-normal cells; but even under this adversarial multiplicative skew iPEB still \emph{won} $3$ of the $12$ log-normal cells (all at higher dimension), where the discriminating signal is spread across enough markers that combining still helps. Where a single dominant marker carries the extreme values, that marker can still edge out any composite; we report rather than suppress those cells, as they mark the honest boundary of a multi-marker approach. Per-cell means and standard deviations are given in Web Tables~1--2.}

\smallskip\noindent\textbf{Mechanism study.} \rev{Under \emph{marker heterogeneity} (S1: two case subgroups driven by different markers; sensitivity objective), objective-driven weighting improves sensitivity over the history-adjusted composite by $+0.094$ ($0.923$ versus $0.829$) and raises the AUC from $0.972$ to $0.984$. Under \emph{differential timing} (S2: an early-rising marker whose signal begins about two years before diagnosis and a late-rising marker that rises about three months before, a fixed panel developed on an independent cohort; lead-time objective; Figure~\ref{fig:s2}), iPEB detects roughly $0.94$ years earlier than the fixed panel (median lead $1.19$ versus $0.25$ years) while lifting sensitivity to nearly one ($0.998$ versus $0.844$): a single fixed panel is forced to favor either the early marker (long lead, lower sensitivity) or the late marker (short lead, higher sensitivity), whereas iPEB combines both. Under \emph{longitudinal drift with serial correlation} (S3: healthy random slope with AR(1) residuals; sensitivity objective), an intercept-only i.i.d.\ layer forfeits sensitivity ($0.529$; i.i.d.\ iPEB $0.518$), whereas the random-slope-plus-OU layer restores it to $0.846$ ($+0.317$; AUC $0.880\!\to\!0.974$); because weighting is held fixed across these rows, the gain is attributable to the time-gap--aware layer itself.} \snew{\amk{On the irregular grid of S4,} unable to rescale for the varying gaps, the gap-ignorant PEB detects only $0.275$ of cases, whereas the gap-aware iPEB detects $0.892$ (AUC $0.978$ versus $0.785$). Because the same gap-ignorant rule reaches $0.529$ on the regular S3 grid, the further drop to $0.275$ isolates the additional cost of ignoring irregular spacing (Web Figure~2).}

\section{Application to the PLCO Lung Cohort}
\label{sec:app}

\subsection{Cohort and comparisons}
\label{sec:app-design}
The Prostate, Lung, Colorectal, and Ovarian (PLCO) Cancer Screening Trial is a randomized, multicenter U.S.\ trial; blood specimens were collected at baseline and annually until lung-cancer diagnosis or end of follow-up \citep{Prorok2000,Gohagan2000,m2015plco,fahrmann2022blood,irajizad2023mortality}. \rev{We analyzed screening-eligible ever-smokers with at least ten pack-years, imposing no smoking-cessation restriction. A subject is included as a \emph{case} if at least one of its blood draws falls within two years of the lung-cancer diagnosis; all draws from included subjects are retained, so a case contributes its entire longitudinal trajectory rather than a single windowed draw. Subjects with at least two pre-diagnostic draws were retained, and the analysis cohort comprised 1{,}998 subjects (324 cases, 1{,}674 controls) with a median of four draws per subject; characteristics are summarized in Table~\ref{tab:char}.} Covariates were \rev{age (biological aging, which shifts baseline marker levels), assay batch (technical variation across laboratory processing batches), and calendar year (secular drift in the assay and study population)}, and the innovation variance used the AR(1)/OU gap-scaling of \eqref{eq:S}; \rev{we used an intercept-only mean model, since the median of four visits per subject falls in the intercept-only regime of Section~\ref{sec:layer} (a subject-specific slope is reliably estimable only with more visits). The random slope and the gap-aware AR(1)/OU innovation variance are separable options, so we retain the gap-scaling of \eqref{eq:S}---keeping the layer's time-gap awareness active on this cohort---and omit only the random slope---the two switches are set independently (Section~\ref{sec:layer}).} \rev{To probe transportability across sites rather than reuse patients across folds, we used a single fixed \emph{center-based} split: the six centers $\{2,4,5,6,8,9\}$ (1{,}402 subjects) formed the training set and the four held-out centers $\{1,3,10,11\}$ (596 subjects, about $30\%$ of the cohort---mirroring the $30\%$ held out in each \amk{mechanism-scenario} cohort) the test set. Within the training centers, a $25\%$ validation slice (subjects held out from weight fitting) chose the scalar-versus-multivariate variant and drove the objective-driven feature selection. \rev{Across the nine training-set fits (three comparisons $\times$ three objectives), the scalar combiner (iPEB-S) was selected in seven and the multivariate combiner (iPEB-M) in two---the multivariate form winning under the combined objective for the six-marker panel and under the sensitivity objective for the selected-four panel; this selection uses only the training and validation data and never the test centers.} No subject appears on both sides, so fitting and evaluation never share data; we favor this clean separation over resampling and therefore report point estimates \amk{(a subject-level bootstrap of the held-out centers, holding the training-calibrated threshold fixed, would attach uncertainty to the reported gains; Algorithm~1, Step~5)}. The combiner was optimized on the training centers at the high-risk triage specificity $\alpha=0.60$ (the operating point for a smoking-eligible screening population) and evaluated once on the test centers at $\{0.60,0.95,0.99\}$. The closed-form weight \eqref{eq:wstar} is independent of $\alpha$ (which enters only the threshold \eqref{eq:thr}, not the direction), so weights learned at $\alpha=0.60$ stay power-maximizing at $0.95$ and $0.99$ and reporting at the stringent points is no operating-point mismatch. We designate sensitivity at $95\%$ specificity in the selected-four comparison as the primary endpoint and treat the remaining specificities, objectives, and comparisons as secondary.} \rev{Sensitivity and lead time are scored per patient on the \emph{whole} pre-diagnostic trajectory rather than within a fixed window: a case subject is detected if \emph{any} of its pre-diagnostic draws crosses the threshold, and its lead time is the interval from the \emph{earliest} crossing to diagnosis. Specificity is scored \emph{per visit}: the operating threshold is calibrated on the \emph{training} control \emph{visits} so that the target fraction (e.g., $95\%$) of control draws fall below it, and it is then applied unchanged to the test centers so no test information enters the threshold. This pairs a per-patient sensitivity with a per-visit specificity, matching the per-visit calibration of Proposition~\ref{prop:calib}; sensitivity, lead time, and AUC are then read off the held-out test centers at that fixed, training-calibrated threshold, so the specificity is set in-sample and every reported gain is out-of-sample. \snew{Because sensitivity is scored per patient while specificity is scored per visit, the reported AUC combines a per-patient and a per-visit level; we therefore read it as a descriptive summary of discrimination rather than attaching model-based standard errors to it, and anchor the primary comparison on sensitivity at a fixed specificity, following the operating-point reporting used for these panels \citep{irajizad2024biomarker}. \amk{Concretely, sweeping the common per-visit threshold traces the curve that pairs each induced per-visit false-positive rate with the resulting per-patient true-positive rate; the reported AUC is the area under this per-patient/per-visit curve, and any resampling-based uncertainty for it should therefore resample at the subject level.}}}

\rev{We report three like-for-like comparisons, each contrasting a \emph{frozen} composite (scored through PEB) with iPEB on a panel of the \emph{same} size, so that any difference reflects objective-driven weighting rather than marker count. In the \emph{four-marker} comparison the frozen panel is the published four-marker composite (4MP) of \citet{fahrmann2022blood}---a fixed, published weighted combination of the log-normalized markers CA125, CEA, CYFRA~21-1, and pro-surfactant protein~B (pro-SFTPB); the exact coefficients are published in full in \citet{fahrmann2022blood} and are applied here verbatim, so the frozen arm is fully reproducible from that reference (we do not reprint the combination, as it belongs to that published work)---and iPEB is applied to those same four markers. In the \emph{six-marker} comparison, absent a published six-marker panel, the frozen comparator is a logistic regression of case/control status on all six markers (CA125, CEA, CYFRA~21-1, pro-SFTPB, osteopontin (OPN), HE4), fit on the training centers and then frozen with PEB applied to its linear predictor---a fit-then-freeze logistic composite being the standard deployable rule when no established panel exists, since a fixed linear combiner is locked before use; iPEB is applied to the same six markers (representative coefficients in the caption of Table~\ref{tab:lung}). The third, \emph{selected-four} comparison lets each method choose its own four-marker panel from the six by its \emph{own} selection strategy, so the contrast is between the two selection philosophies at a common panel size. The frozen side follows the standard association-based route: it ranks the six markers by the magnitude of their logistic $z$-statistics on the training centers and retains the top four, then refits and freezes a logistic on those four (coefficients in the caption of Web Table~5). iPEB instead uses its objective-driven backward elimination (Step~3a of Algorithm~1) to reach four markers and re-optimizes on them. Crucially, both selections are done entirely on the training centers---the held-out centers are touched only at evaluation---so neither can leak test information, and the comparison is fair despite the differing selection criteria. Notably the logistic ranking recovered exactly the 4MP markers, whereas iPEB's objective-driven selection brought in osteopontin and HE4 in place of two 4MP members (which two depending on the objective)---the objective, not a fixed panel, drives its selection. Sensitivity is reported for iPEB under the sensitivity objective and lead time under the lead-time objective; full per-objective results on both the test and training centers appear in Web Tables~3--5 (the selected-four case in Web Table~5, whose caption also lists the marker set each objective selected). As an unweighted, subject-level companion to the tabulated gains we cross-classified the held-out case subjects by which method detected them at $95\%$ specificity (detected by both, by iPEB only, by the frozen panel only, or by neither), reported in Web Table~6.}

\subsection{Results}
\label{sec:app-results}
\rev{All numbers below are computed once on the held-out test centers (99 case and 497 control subjects; Table~\ref{tab:lung}). Under the sensitivity objective for optimization, at the permissive $60\%$ specificity, the operating threshold is low and both methods flag almost every case ($0.92$--$0.96$), so iPEB is marginally lower there; the separation appears where screening actually operates at high specificity. At $95\%$ specificity iPEB improved sensitivity over the frozen panel by $+0.051$ (four-marker, $0.485$ vs $0.434$), $+0.071$ (six-marker, $0.485$ vs $0.414$), and $+0.060$ (selected-four, $0.515$ vs $0.455$); at $99\%$ the gains were $+0.060$, $+0.080$, and $+0.020$, respectively. Additionally, iPEB's discrimination was slightly higher than the panel's in every comparison (test-set AUC $0.877$ vs $0.875$ four-marker, $0.876$ vs $0.869$ six-marker, and $0.873$ vs $0.871$ selected-four). Under the lead-time objective iPEB and PEB were comparable (Table~\ref{tab:lung}). Median lead times were long throughout (roughly $1$--$2.6$ years). iPEB's median lead was marginally shorter than the panel's at the lower specificities (by up to about $0.4$ years at $0.60$ and $0.95$) and comparable-to-longer at the stringent $99\%$ point; the small dip is expected rather than a loss of early detection---iPEB flags more cases, and the additional ones it catches tend to present closer to diagnosis, which pulls the median down. The multi-year lead is thus retained while sensitivity rises, so the sensitivity gain does not come at the cost of catching cases later. A subject-level cross-classification of the test cases at $95\%$ specificity agreed without weighting: $9$ versus $4$ cases were detected by iPEB alone versus the panel alone in the four-marker comparison, $11$ versus $4$ in the six-marker comparison, and $14$ versus $8$ in the selected-four comparison, so the net movement of individual patients favored iPEB (Web Table~6).}

\section{Discussion}
\label{sec:disc}
iPEB integrates a time-gap--aware PEB layer with objective-driven multi-marker weighting, ensuring that individualized, history-adjusted scores and their calibrated per-visit threshold account for longitudinal biomarker dynamics while directly aligning the resulting composite score with the desired clinical objective. The time-gap--aware standardization allows prediction uncertainty to increase with the elapsed time since the previous measurement, thereby preserving per-visit specificity under irregular sampling schedules (Proposition~\ref{prop:calib}). In parallel, objective-driven weighting prioritizes biomarkers that contribute most strongly to performance at the prespecified operating point, enabling both feature selection and optimized biomarker combination in accordance with the clinical objective (Proposition~\ref{prop:power}). In a 48-cell factorial spanning correlation, dimension, and misspecification, iPEB improved sensitivity over deployable composites in nearly all settings; in a dedicated differential-timing scenario it additionally delivered substantially earlier detection. \rev{On the PLCO lung cohort, evaluated on held-out screening centers, it improved high-specificity sensitivity over three like-for-like benchmarks---a published four-marker panel, a frozen six-marker logistic composite, and a marker set each method selected for itself---while discrimination was slightly higher for iPEB under the sensitivity objective and comparable or slightly higher under the lead-time objective, locating its advantage precisely where a screening program operates: more true cases flagged at a fixed low false-positive rate, with lead time preserved or extended at the stringent operating points. That the gains held on centers not used for fitting, and that a simple patient-level count of who was detected moved in the same direction as the tabulated sensitivities, argues that the improvement reflects the method rather than an artifact of a favorable split.}

Several limitations bound these gains. The approach targets biomarkers with a genuine \emph{within-subject} longitudinal signal. \snew{When the informative signal is instead essentially cross-sectional and concentrated in a single dominant marker, history adjustment can remove the very feature being screened, and multi-marker weighting then offers little over that marker alone. We regard this as a scope condition rather than a failure of the framework. It is consistent with the log-normal cells of the factorial study, the only regime in which the best single marker sometimes outperforms iPEB. Even there, iPEB still won $3$ of the $12$ log-normal cells once the dimension was high enough for combining to help.} Random slopes and autocorrelation parameters are difficult to estimate with only two or three visits per subject; intercept-only models with pooled temporal parameters are the pragmatic default. \snew{More generally, the advantage of the time-gap--aware layer grows with the number of serial draws per subject: the PLCO cohort's median of four draws is comparatively light, so the real-data gains reported here are, if anything, conservative relative to what denser longitudinal sampling would afford.} Even at high specificity, positive predictive value is bounded by the low prevalence of preclinical cancer in a screening population: because the great majority of those tested are disease-free, the fraction of flagged subjects who truly have cancer stays modest however favorable the sensitivity--specificity trade-off, so iPEB is best positioned as a first-stage triage that enriches a population for confirmatory work-up rather than as a standalone diagnostic. Reporting sensitivity at a fixed high specificity, as we do, isolates the operating characteristic the method controls and leaves positive predictive value to be read against the prevalence of the intended deployment setting. Abrupt assay changes likewise require refitting. External validation is necessary before transport to new populations, and prospective evaluation is the natural next step.


\backmatter

\section*{Acknowledgements}
The authors thank the PLCO participants and investigators. \rev{The authors used AI-assisted tools for language editing; all methods, results, and text were reviewed and verified by the authors.} \amk{This work was supported by the National Institutes of Health [U01CA271888 to S.H.]; the Cancer Prevention and Research Institute of Texas [RP160693 to K.A.D.]; and generous philanthropic contributions to The University of Texas MD Anderson Cancer Center Moon Shots Program.} \rev{The authors declare no potential conflicts of interest.}

\section*{Data Availability Statement}
PLCO biospecimen and biomarker data are available from the National Cancer Institute's Cancer Data Access System (CDAS) under its standard data-use agreement.

\section*{Supplementary Materials}
Web Appendices A--D, Web Figures~1 and~2, and Web Tables~1--6, referenced in Sections~\ref{sec:layer}--\ref{sec:app}, are available with this paper at the Biometrics website on Oxford Academic. iPEB is implemented in an open-source \textsf{R} package \rev{(\texttt{iPEB}), openly available at \texttt{https://github.com/bitansa/iPEB} with a CRAN release forthcoming}, providing the time-gap--aware PEB layer, the scalar and multivariate combiners, and the calibration, sensitivity, lead-time, and \rev{evaluation} utilities used here. \rev{A companion \textsf{R}/Shiny application ships with the package and can be launched with \texttt{iPEB::run\_app()}; a hosted, browser-based version for users without \textsf{R} \amk{will be} linked from the repository.} \rev{Scripts that} reproduce the simulation studies and the real-data workflow (the latter on a synthetic cohort, as the PLCO data are controlled-access) \rev{are available} at \texttt{https://github.com/bitansa/iPEB/tree/main/reproducibility}.

\bibliographystyle{biom}
\bibliography{bibliography}

\begin{table}[htbp]\centering
\caption{Mechanism simulation scenarios: mean (standard deviation) over 200 cohorts at $95\%$ \rev{per-visit} specificity. \rev{Sens, per-patient sensitivity under whole-trajectory detection}; Lead, median lead time (years). iPEB uses the sensitivity objective (S1, S3\snew{, S4}) and the lead-time objective (S2). \snew{S4 repeats the S3 mechanism on an irregularly spaced visit grid; its lead-time column is confounded by the large detection-rate gap and is not the comparison of interest.}}
\label{tab:focused}
\begin{tabular}{llccc}
\hline
Scenario & Method & AUC & \rev{Sens} & Lead (yr)\\
\hline
\multicolumn{5}{l}{\emph{S1: heterogeneous markers}}\\
 & PEB & \rev{0.972 (0.008)} & \rev{0.829 (0.050)} & \rev{0.633 (0.069)}\\
 & iPEB & \rev{0.984 (0.005)} & \rev{0.923 (0.041)} & \rev{0.621 (0.066)}\\
\hline
\multicolumn{5}{l}{\emph{S2: differential timing \rev{(early vs.\ late marker)}}}\\
 & PEB (panel) & \rev{0.973 (0.015)} & \rev{0.844 (0.117)} & \rev{0.252 (0.347)}\\
 & iPEB & \rev{0.995 (0.002)} & \rev{0.998 (0.005)} & \rev{1.194 (0.154)}\\
\hline
\multicolumn{5}{l}{\emph{S3: longitudinal drift + serial correlation}}\\
 & \rev{PEB (intercept, i.i.d.)} & \rev{0.880 (0.023)} & \rev{0.529 (0.065)} & \rev{0.548 (0.154)}\\
 & \rev{iPEB (intercept, i.i.d.)} & \rev{0.877 (0.024)} & \rev{0.518 (0.068)} & \rev{0.563 (0.144)}\\
 & \rev{iPEB (slope + OU)} & \rev{0.974 (0.007)} & \rev{0.846 (0.049)} & \rev{0.632 (0.000)}\\
\hline
\multicolumn{5}{l}{\snew{\emph{S4: irregular visit spacing (S3 mechanism, uneven gaps)}}}\\
 & \snew{PEB (gap-ignorant)} & \snew{0.785 (0.031)} & \snew{0.275 (0.061)} & \snew{0.531 (0.384)}\\
 & \snew{iPEB (gap-aware)} & \snew{0.978 (0.007)} & \snew{0.892 (0.044)} & \snew{0.351 (0.123)}\\
\hline
\end{tabular}
\end{table}

\begin{table}[htbp]\centering
\caption{\rev{Patient characteristics of the PLCO lung analysis cohort (screening-eligible ever-smokers, $\ge 10$ pack-years): overall, by case/control status, and by the fixed center-based train/test split (train centers 2,4,5,6,8,9; test centers 1,3,10,11). The training centers hold 225 cases and 1{,}177 controls, and the test centers 99 cases and 497 controls. Cases are older and heavier smokers and contribute fewer draws, since follow-up ends at diagnosis.}}
\label{tab:char}
\begin{tabular}{lccccc}
\hline
 & \rev{Overall} & \rev{Cases} & \rev{Controls} & \rev{Train} & \rev{Test}\\
\hline
\rev{Subjects} & \rev{1998} & \rev{324} & \rev{1674} & \rev{1402} & \rev{596}\\
\rev{Blood draws (total)} & \rev{7807} & \rev{956} & \rev{6851} & \rev{5403} & \rev{2404}\\
\rev{Draws/subject (median)} & \rev{4} & \rev{3} & \rev{4} & \rev{4} & \rev{4}\\
\rev{Age, mean (SD)} & \rev{62.6 (5.4)} & \rev{64.6 (5.2)} & \rev{62.2 (5.3)} & \rev{62.5 (5.4)} & \rev{62.6 (5.2)}\\
\rev{Pack-years, mean (SD)} & \rev{43.2 (29.1)} & \rev{61.8 (33.5)} & \rev{39.6 (26.7)} & \rev{43.4 (28.9)} & \rev{42.6 (29.3)}\\
\hline
\end{tabular}
\end{table}

\begin{table}[htbp]\centering
\caption{\rev{PLCO lung cohort, held-out TEST centers (single fixed center split; 99 case / 497 control subjects). Three like-for-like comparisons at a common panel size: four-marker (frozen \amk{4MP}), six-marker (frozen logistic), and selected-four (both sides reduced to four markers---logistic selection vs.\ iPEB objective-driven backward selection). Sensitivity is for iPEB under the sensitivity objective; median lead time (years) for iPEB under the lead-time objective; the AUC column gives test-set discrimination under that same objective---iPEB's AUC exceeds the panel's in every comparison under the sensitivity objective, and matches or exceeds it under the lead-time objective. Specificity is \emph{per visit} (the operating threshold places the stated fraction of control \emph{visits} below it); sensitivity and lead time are \emph{per patient} over the whole pre-diagnostic trajectory (a case is detected if any visit crosses; lead time is measured from the earliest crossing). Single split, so point estimates. Representative six-marker logistic coefficients (shown for interpretation, from a full-data fit on the log-normalized markers; in the analysis the panel is refit on the training centers only): intercept $-13.71$, CA125 $0.554$, CEA $1.291$, CYFRA~21-1 $0.256$, pro-SFTPB $1.494$, OPN $-0.205$, HE4 $0.189$.}}
\label{tab:lung}
\begin{tabular}{llcccc}
\hline
Comparison & Method & Spec $0.60$ & Spec $0.95$ & Spec $0.99$ & AUC\\
\hline
\multicolumn{6}{l}{\emph{\rev{Sensitivity (sensitivity objective)}}}\\
\rev{Four-marker}   & \rev{PEB (\amk{4MP})}       & \rev{0.949} & \rev{0.434} & \rev{0.182} & \rev{0.875}\\
                    & \rev{iPEB}                 & \rev{0.939} & \rev{0.485} & \rev{0.242} & \rev{0.877}\\
\rev{Six-marker}    & \rev{PEB (logistic)}       & \rev{0.960} & \rev{0.414} & \rev{0.162} & \rev{0.869}\\
                    & \rev{iPEB}                 & \rev{0.939} & \rev{0.485} & \rev{0.242} & \rev{0.876}\\
\rev{Selected-four} & \rev{PEB (logistic sel.)}  & \rev{0.960} & \rev{0.455} & \rev{0.172} & \rev{0.871}\\
                    & \rev{iPEB (backward sel.)} & \rev{0.919} & \rev{0.515} & \rev{0.192} & \rev{0.873}\\
\hline
\multicolumn{6}{l}{\emph{\rev{Median lead time, yr (lead-time objective)}}}\\
\rev{Four-marker}   & \rev{PEB (\amk{4MP})}       & \rev{2.60} & \rev{2.30} & \rev{1.09} & \rev{0.875}\\
                    & \rev{iPEB}                 & \rev{2.49} & \rev{2.01} & \rev{1.21} & \rev{0.875}\\
\rev{Six-marker}    & \rev{PEB (logistic)}       & \rev{2.59} & \rev{2.30} & \rev{1.06} & \rev{0.869}\\
                    & \rev{iPEB}                 & \rev{2.49} & \rev{1.93} & \rev{1.21} & \rev{0.876}\\
\rev{Selected-four} & \rev{PEB (logistic sel.)}  & \rev{2.59} & \rev{2.11} & \rev{1.13} & \rev{0.871}\\
                    & \rev{iPEB (backward sel.)} & \rev{2.55} & \rev{1.90} & \rev{1.13} & \rev{0.887}\\
\hline
\end{tabular}
\end{table}


\begin{figure}[p]\centering
\begin{minipage}[t]{0.49\linewidth}\centering
\includegraphics[width=\linewidth]{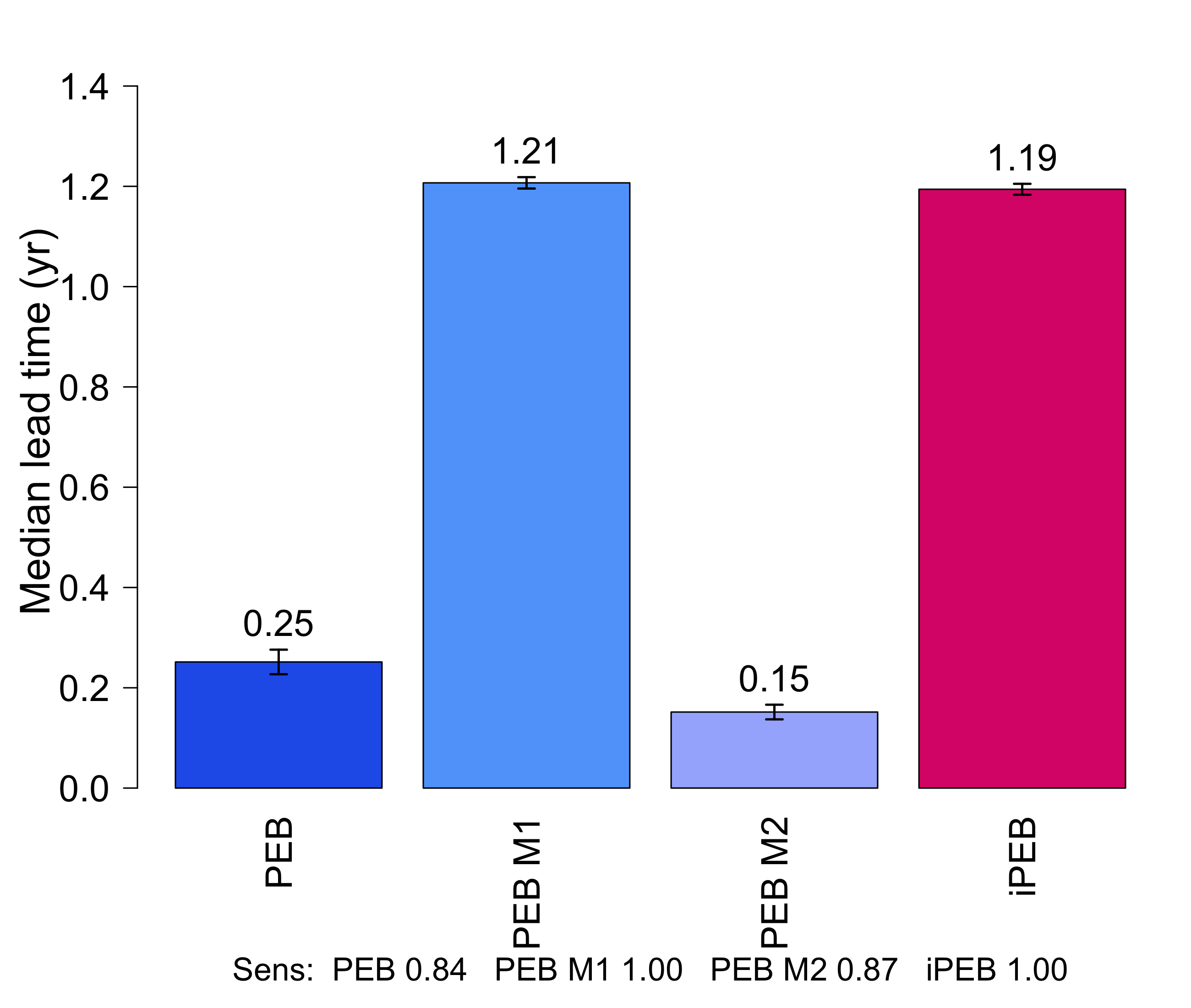}\\[4pt]
{\footnotesize \textbf{(a)} \rev{Median lead time by method (each method's sensitivity annotated below).}}
\end{minipage}\hfill
\begin{minipage}[t]{0.49\linewidth}\centering
\includegraphics[width=\linewidth]{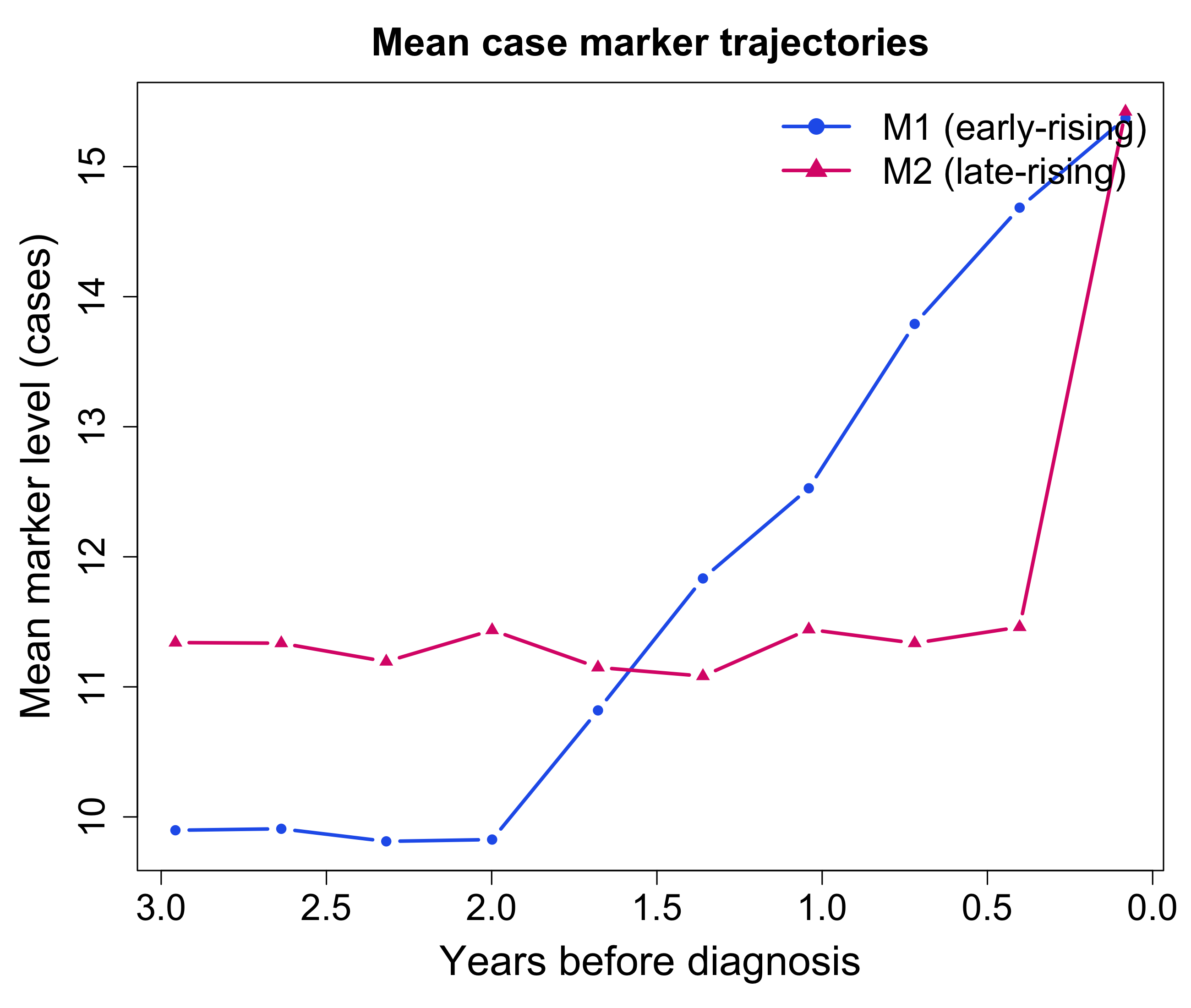}\\[4pt]
{\footnotesize \textbf{(b)} \rev{Mean case marker trajectories: M1 rises early, M2 rises late before diagnosis.}}
\end{minipage}
\caption{\rev{Mechanism Simulation~2 (differential marker timing), 200 cohorts at $95\%$ specificity. \textbf{(a)}~Median lead time for the fixed panel, its single-marker variants, and iPEB under the lead-time objective: iPEB attains the longest median lead time (about $1.19$ versus $0.25$ years for the fixed panel) while lifting sensitivity to nearly one ($1.00$ versus $0.84$). \textbf{(b)}~The underlying case means show why: marker~M1 rises early and M2 rises late, so a single fixed panel must favor either early detection (M1: long lead, lower sensitivity) or late detection (M2: shorter lead, higher sensitivity), whereas iPEB combines both. \amk{Error bars in (a): $\pm1$ Monte Carlo standard error of the mean over the 200 cohorts.}}}
\label{fig:s2}

\medskip
\begin{minipage}{0.92\linewidth}\footnotesize\raggedright
\textit{Alt text:} (a) A bar chart of median lead time by method with the iPEB bar tallest and the fixed-panel bar shortest\amk{, each bar carrying a small error whisker}; (b) two case-mean curves, one rising early and one rising late before diagnosis.
\end{minipage}
\end{figure}

\begin{figure}[p]\centering
\begin{minipage}[t]{0.49\linewidth}\centering
\includegraphics[width=\linewidth]{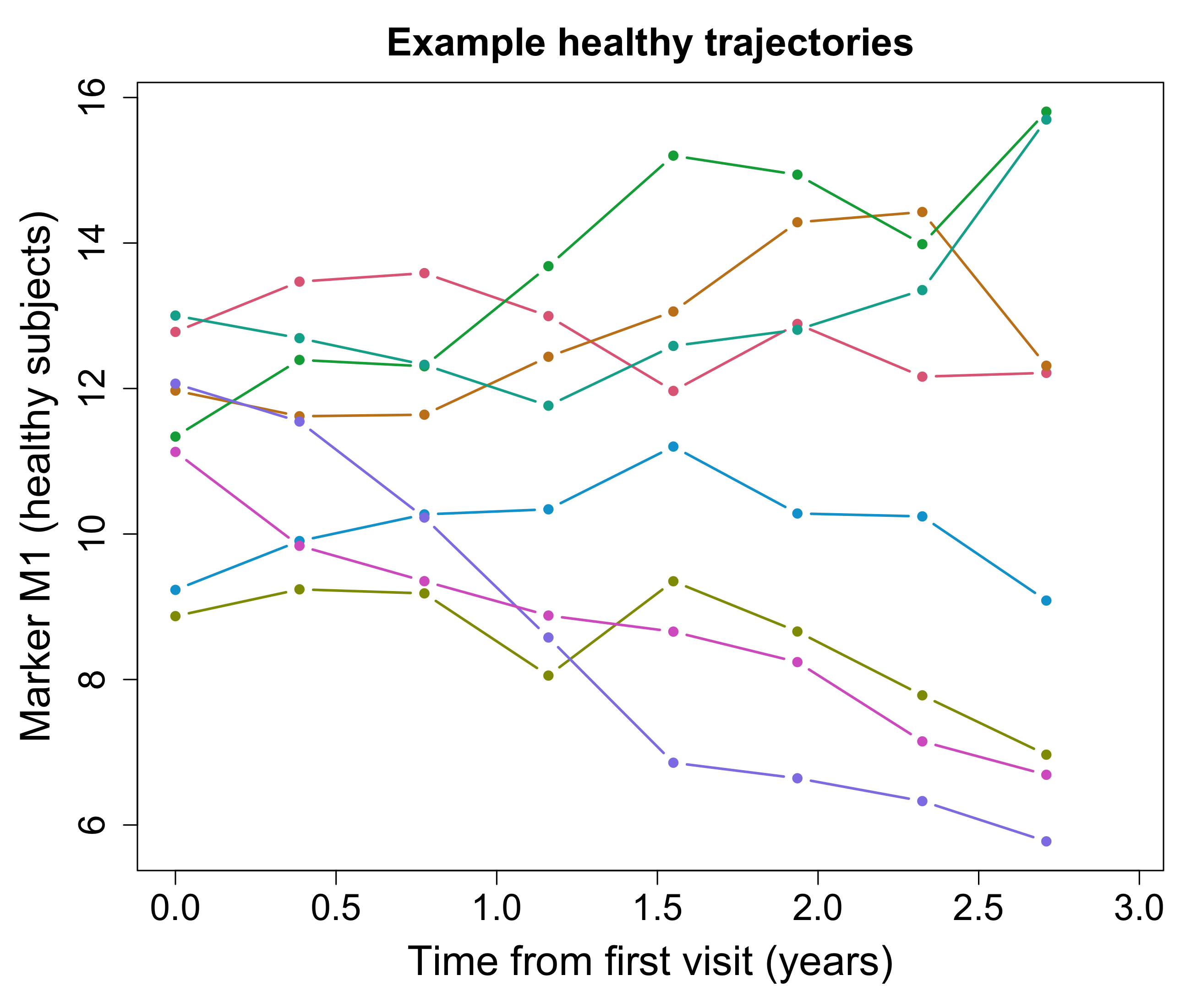}\\[4pt]
{\footnotesize \textbf{(a)} \rev{Example healthy trajectories: marker levels drift over time under a random slope with AR(1)/OU noise.}}
\end{minipage}\hfill
\begin{minipage}[t]{0.49\linewidth}\centering
\includegraphics[width=\linewidth]{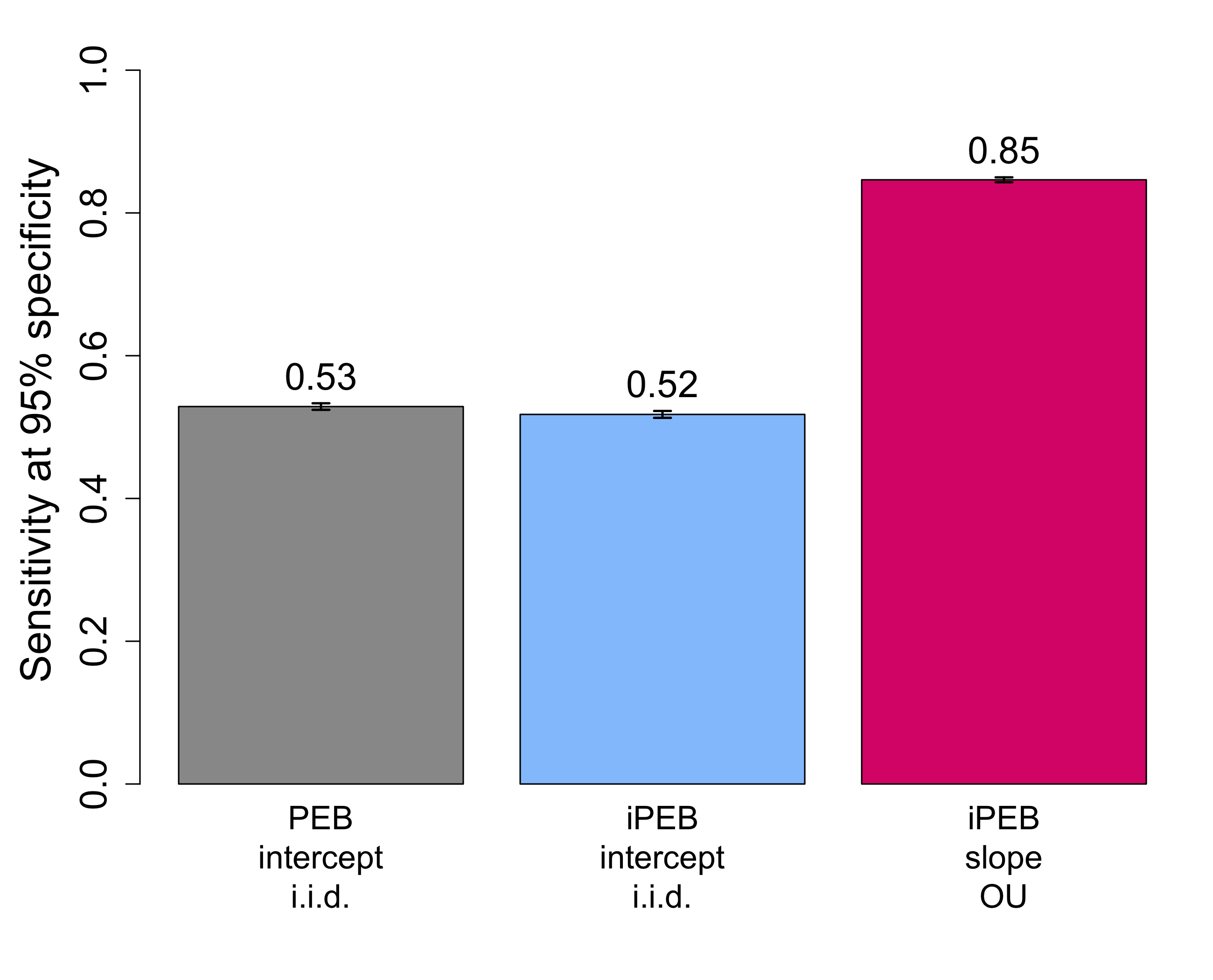}\\[4pt]
{\footnotesize \textbf{(b)} \rev{Sensitivity by method: modeling the trend recovers the sensitivity a naive layer forfeits.}}
\end{minipage}
\caption{\rev{Mechanism Simulation~3 (longitudinal drift with serial correlation), 200 cohorts at $95\%$ specificity. \textbf{(a)}~Healthy subjects carry a genuine random slope with AR(1)/OU residual autocorrelation, so their marker levels drift over time. \textbf{(b)}~An intercept-only, i.i.d.\ empirical Bayes layer mistakes this drift for signal and forfeits sensitivity (PEB intercept/i.i.d.\ $0.53$; iPEB intercept/i.i.d.\ $0.52$), whereas the time-gap--aware layer with a subject-specific slope and AR(1)/OU innovation variance removes the healthy drift and restores sensitivity to $0.85$. Because the marker weighting is held fixed across the three bars, the gain is attributable to the time-gap--aware layer itself. \amk{Error bars in (b): $\pm1$ Monte Carlo standard error of the mean over the 200 cohorts.}}}
\label{fig:s3}

\medskip
\begin{minipage}{0.92\linewidth}\footnotesize\raggedright
\textit{Alt text:} (a) Several healthy marker trajectories that drift over time; (b) a three-bar chart of sensitivity in which the slope-plus-OU iPEB bar is much taller than the naive PEB and i.i.d.\ iPEB bars\amk{, each bar carrying a small error whisker}.
\end{minipage}
\end{figure}

\begin{figure}[htbp]\centering
\includegraphics[width=0.90\linewidth]{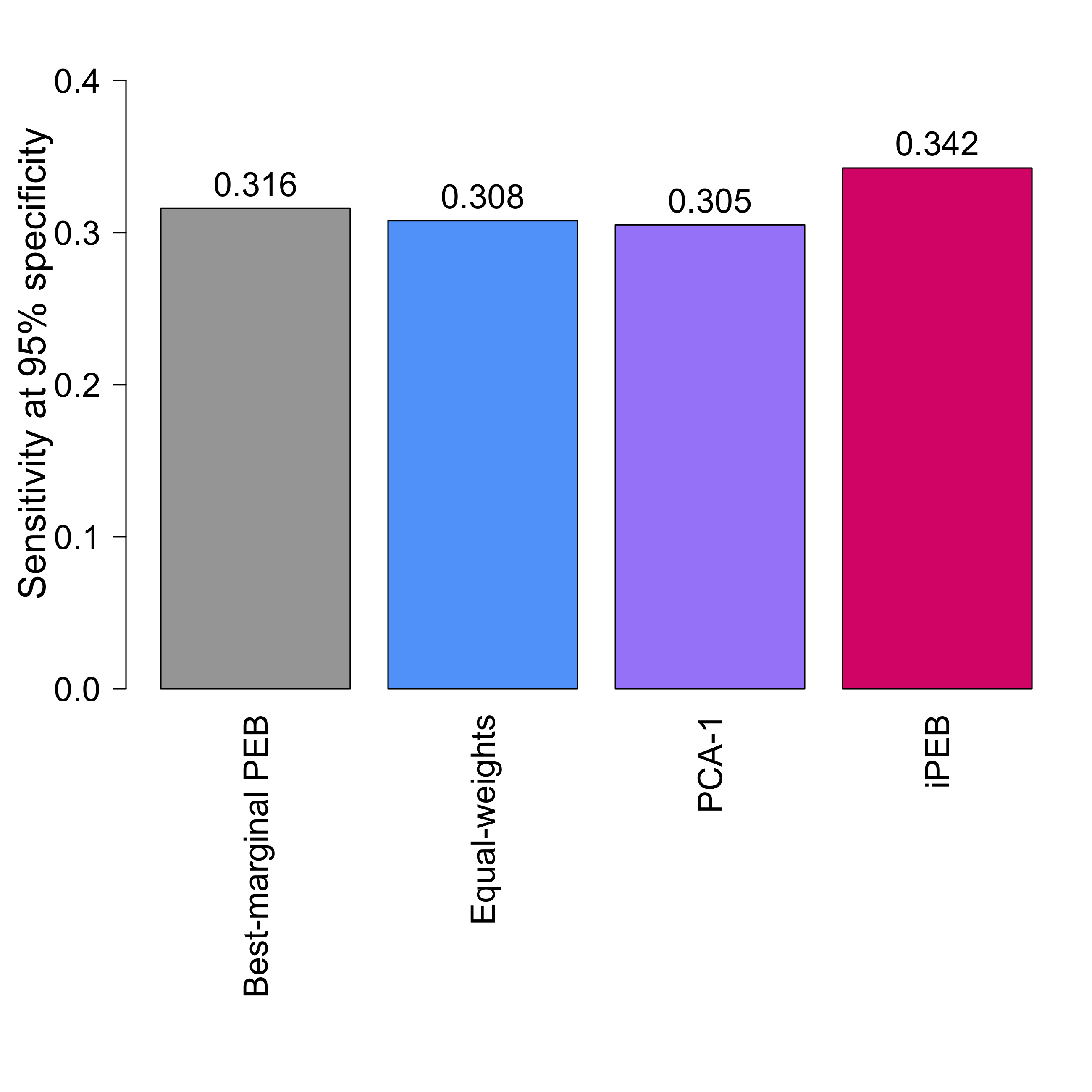}
\caption{\rev{\amk{Factorial study}: mean per-patient sensitivity at $95\%$ per-visit specificity (whole-trajectory detection) across the 48 design cells, comparing iPEB (sensitivity objective) with the best single marginal PEB marker, an equal-weight composite, and PCA-1. iPEB's mean sensitivity across cells is \amk{$0.342$}, versus $0.316$ best-marginal, $0.308$ equal-weight, and $0.305$ PCA-1. iPEB wins \amk{$45/48$} cells versus equal-weight, \amk{$45/48$ versus PCA-1}, and $39/48$ versus the stronger best-marginal PEB marker. All $9$ cells it does not win against the best marginal marker are log-normal designs, where one dominant heavy-tailed marker carries the extreme values; but even under this multiplicative skew iPEB wins $3$ of the $12$ log-normal cells, all at higher dimension. iPEB separates most under marker correlation and higher dimension ($p=10$), where objective-driven weighting has the most to exploit.}}
\label{fig:ext}

\medskip
\begin{minipage}{0.92\linewidth}\footnotesize\raggedright
\textit{Alt text:} A grouped bar chart of mean sensitivity across four methods; the iPEB bar is the tallest, followed by the best-marginal PEB, equal-weight, and PCA-1 bars.
\end{minipage}
\end{figure}


\phantomsection\label{lastpage}

\end{document}


\maketitle

These Web-based Supplementary Materials collect the proofs of the propositions
stated in Section~3 of the main paper (\hyperref[app:proofs]{Web Appendix~A}), the state-space
recursions underlying the time-gap--aware layer (\hyperref[app:kalman]{Web Appendix~B}), the
weight-optimization derivation including a robust multi-window extension
(\hyperref[app:weights]{Web Appendix~C}), full simulation-design details (\hyperref[app:simdesign]{Web Appendix~D}), the
complete per-cell results of the factorial study (\hyperref[tab:ext-p5]{Web Tables~1--2}), the full PLCO lung results across all objectives on both the test and training centers, the selected-four comparison, and the subject-level detection counts (\hyperref[tab:supp-lung-full]{Web Tables~3--6}), the mean ROC for mechanism Simulation~1 (\hyperref[webfig:s1]{Web Figure~1}), and the sensitivity comparison for mechanism Simulation~4 (\hyperref[webfig:s4]{Web Figure~2}). Equation,
proposition, and notation conventions follow the main text.

\section{Proofs of the propositions}
\label{app:proofs}
Throughout, $\nu_{ijt}$ is the one-step innovation from model~(2) of the main
text, $S_{ijt}$ its gap-dependent prediction variance (equation~(4)),
$R_{ijt}=\nu_{ijt}/\sqrt{S_{ijt}}$, $\Sigma_R=\mathrm{Cov}_{\mathcal H}(R)$, and
$c(w,\alpha)=z_{\alpha}\sqrt{w^\top\Sigma_R w}$. $\mathcal F_{t-1}$ denotes
the history up to the previous visit.

\begin{proposition}[Calibration]
Under the healthy model, for any $i,t$ and gap $\Delta_{it}>0$,
$\mathbb E_{\mathcal H}[R_{ijt}\mid\mathcal F_{t-1}]=0$,
$\mathrm{Var}_{\mathcal H}(R_{ijt}\mid\mathcal F_{t-1})=1$, and
$R_{ijt}\sim\mathcal N(0,1)$. Hence
$\Pr_{\mathcal H}\{S_i(t;w)\le c(w,\alpha)\}=\alpha$ for every $t$ and $\Delta_{it}$; the per-visit specificity equals $\alpha$.
\end{proposition}
\begin{proof}
The linear-Gaussian model~(2) admits the Kalman prediction-error decomposition:
conditional on $\mathcal F_{t-1}$, the one-step predictor $\widehat m_{ijt}$ is the
conditional mean and the innovation $\nu_{ijt}=Y_{ijt}-\widehat m_{ijt}$ satisfies
$\nu_{ijt}\mid\mathcal F_{t-1}\sim\mathcal N(0,S_{ijt})$, where $S_{ijt}$ is the
one-step prediction variance from the filter (\hyperref[app:kalman]{Web Appendix~B}), which reduces to the
gap function of equation~(4) in the intercept-only regime. Standardizing, $R_{ijt}=\nu_{ijt}/\sqrt{S_{ijt}}\mid
\mathcal F_{t-1}\sim\mathcal N(0,1)$, gives the stated conditional mean and
variance; marginalizing preserves $\mathcal N(0,1)$. For the composite,
$S_i(t;w)=w^\top R_{it}$ with $R_{it}\mid\mathcal F_{t-1}\sim\mathcal N(0,\Sigma_R)$ (taking the
healthy cross-marker innovation correlation to be gap-stable, so the pooled $\Sigma_R$ serves as
the per-visit correlation),
so $S_i(t;w)\mid\mathcal F_{t-1}\sim\mathcal N(0,w^\top\Sigma_R w)$ and
$\Pr_{\mathcal H}\{S_i(t;w)\le z_{\alpha}\sqrt{w^\top\Sigma_R w}\}=\Phi(z_{\alpha})=\alpha$,
independently of $t$ and $\Delta_{it}$.
\end{proof}

\begin{proposition}[Minimum-variance prediction]
The \rev{time-gap--aware} empirical Bayes (Kalman/BLUP) one-step predictor $\widehat m_{ijt}$
minimizes prediction mean-squared error among linear unbiased predictors using the
same history; hence the numerator of $R_{ijt}$ has minimum variance, and the layer
achieves no larger prediction MSE than mean-type baselines.
\end{proposition}
\begin{proof}
Given the correct second-order structure---the random-effects covariance $D_j$ and
the AR(1)/OU residual covariance with the observed gaps $\Delta_{it}$---the
generalized least squares / best linear unbiased predictor is, by the Gauss--Markov
theorem for correlated errors, the minimum-variance linear unbiased predictor of
$Y_{ijt}$ from $\mathcal F_{t-1}$. The Kalman filter computes exactly this BLUP
recursively (its gain is the GLS projection of the new observation onto the history),
so $\widehat m_{ijt}$ attains the minimum achievable prediction MSE. Mean-type
predictors---the subject mean, last-observation-carried-forward, and the i.i.d.\
empirical Bayes predictor that ignores $\Delta_{it}$---are (marginally) linear predictors that
use a misspecified (or degenerate) covariance, so in the model-based comparison
their prediction MSE is no smaller; strict inequality holds whenever the residual correlation is
nonzero and the gaps vary. Consequently $\mathrm{Var}(\nu_{ijt}\mid\mathcal F_{t-1})=
S_{ijt}$ is minimal among such predictors.
\end{proof}

\begin{proposition}[Power advantage over gap-ignorant rules]
For a case alternative shifting the mean by $\delta_j>0$ on a marker subset within the
pre-diagnostic window, and fixed per-visit specificity $\alpha$ for every gap, the
Neyman--Pearson most powerful test thresholds a linear form of $R_{ijt}$ (univariate
$R_{ijt}$; multivariate $w^\top R_{it}$ with $w\propto\Sigma_R^{-1}\mu_\Delta$). Any
statistic that does not standardize by $\sqrt{S_{ijt}}$ cannot be most powerful for all
$\Delta_{it}$ simultaneously; hence the time-gap--aware rule has weakly higher
true-positive rate at fixed specificity, strictly so whenever $\Delta_{it}$ varies.
\end{proposition}
\begin{proof}
Fix a gap $\Delta_{it}$ and work on the standardized scale $R_{it}=D_t^{-1}\nu_{it}$.
Under the healthy null $R_{it}\mid\mathcal F_{t-1}\sim\mathcal N(0,\Sigma_R)$ and under
the case alternative $R_{it}\mid\mathcal F_{t-1}\sim\mathcal N(\mu_\Delta,\Sigma_R)$,
where $\mu_\Delta$ is the case mean shift of $R_{it}$ (equivalently $m_\Delta=D_t\mu_\Delta$
on the raw innovation scale). For two Gaussians with common covariance the likelihood
ratio is monotone in $\mu_\Delta^\top\Sigma_R^{-1}R_{it}$ (multivariate) or in $R_{ijt}$
(univariate), so by the Neyman--Pearson lemma the most powerful test at specificity
$\alpha$ rejects for large
$w^\top R_{it}$ with $w\propto\Sigma_R^{-1}\mu_\Delta$, the maximizer of the
noncentrality $\Lambda(w)=w^\top\mu_\Delta/\sqrt{w^\top\Sigma_R w}$ (a generalized
Rayleigh quotient; see \hyperref[app:weights]{Web Appendix~C}). A statistic $T$ that omits the factor
$1/\sqrt{S_{ijt}}$ uses gap-dependent effective thresholds: to hold specificity
$\alpha$ for every gap it must vary its cut with $\Delta_{it}$, but then it is a
different (suboptimal) linear function of $\nu_{it}$ for at least one gap and cannot
coincide with the most powerful test at all gaps simultaneously. Averaging the
per-gap power over the empirical gap distribution, the standardized rule therefore has
true-positive rate at least as large, with strict inequality whenever the gap
distribution is nondegenerate.
\end{proof}

\paragraph{Diagnostics and robustness.} The propositions are model-based
(Gaussianity, correct mean/variance structure). In practice we check the healthy
innovations satisfy $R_{ijt}\approx\mathcal N(0,1)$ via normal quantile--quantile
plots, verify the residual autocorrelation is captured by the fitted $\phi_j$
(or $\lambda_j$) through the innovation autocorrelation function, and confirm that the
realized specificity is stable across calendar time and assay batch by
stratified-specificity plots. Under heavy tails or change-points calibration holds
approximately and the power advantage attenuates but does not reverse.

\section{State-space (Kalman) details}
\label{app:kalman}
Write each marker as a linear Gaussian state-space model with state
$\theta_{ijt}=(b^{(0)}_{ij,t},b^{(1)}_{ij,t})^\top$, transition
$T_j(\Delta_{it})=\left[\begin{smallmatrix}1&\Delta_{it}\\0&1\end{smallmatrix}\right]$
(local linear trend) or $I_2$ (static random effects), and observation
$Y_{ijt}=\mu_j+x_{it}^\top\beta_j+Z_{it}\theta_{ijt}+\varepsilon_{ijt}$ with observation
row $Z_{it}=[1\ h_{it}]$ for the static ($I_2$) random-effects formulation---which reproduces
the main-text model $b^{(0)}_{ij}+b^{(1)}_{ij}h_{it}$ of equation~(2)---or $Z_{it}=[1\ 0]$
when the local-linear-trend transition is used (the level state then already integrates the
slope), and AR(1)/OU observation noise. The filter recursion is: predict
$a_{ijt}=T_j m_{ij,t-1}$ and $P^-=T_jP_{ij,t-1}T_j^\top+R_jQ_jR_j^\top$; form the
innovation $\nu_{ijt}=Y_{ijt}-\mu_j-x_{it}^\top\beta_j-Z_{it}a_{ijt}$ with variance
$S_{ijt}=Z_{it}P^-Z_{it}^\top+\sigma_j^2\{1-\rho_j^2(\Delta_{it})\}$; update via the
Kalman gain $K=P^-Z_{it}^\top S_{ijt}^{-1}$; and standardize
$R_{ijt}=\nu_{ijt}/\sqrt{S_{ijt}}$. Hyperparameters are estimated by REML/maximum
likelihood through the prediction-error decomposition; set $Q_j=0$ (static random
effects) when most subjects have $\le2$ visits, use diffuse initialization
$P_{j0}=\kappa I$, and compute $\rho_j(\Delta)$ and $S_{ijt}$ from the actual gaps
$\Delta_{it}$. Implementations use standard state-space routines; we validate that the
healthy $R_{ijt}$ are approximately $\mathcal N(0,1)$ and that $\Sigma_R$ is stable
across calendar time and batch.

\section{Weight optimization: derivation and extensions}
\label{app:weights}
On the innovation scale a healthy visit has $R_{it}\sim\mathcal N(0,\Sigma_R)$ and a
case in window has $R_{it}\sim\mathcal N(\mu_\Delta,\Sigma_R)$. For the composite
$S_i(t;w)=w^\top R_{it}$ the true-positive rate at specificity $\alpha$ is
$1-\Phi\{z_{\alpha}-\Lambda(w)\}$ with noncentrality
$\Lambda(w)=w^\top\mu_\Delta/\sqrt{w^\top\Sigma_R w}$; maximizing power is thus
maximizing $\Lambda(w)$. Differentiating the generalized Rayleigh quotient, the
stationarity condition $\Sigma_R w\,(w^\top\mu_\Delta)=\mu_\Delta\,(w^\top\Sigma_R w)$
gives $w\propto\Sigma_R^{-1}\mu_\Delta$; normalizing so $w^{\star\top}\Sigma_R w^\star=1$
yields the closed form (equation~(7) of the main text), and projecting onto the
nonnegative orthant if required. This is the population optimum; in finite samples
$\mu_\Delta$ and $\Sigma_R$ are replaced by \amk{their sample counterparts from the tuning
split, with $\Sigma_R$ regularized by a small ridge on the diagonal for numerical stability}.

\paragraph{\rev{Lead-time objective and the sensitivity floor.}} \rev{The three objectives
share the loss~(6) and differ only in the fixed profile $(\lambda_1,\lambda_2,\lambda_3)$.
Because the threshold~(5) is recalibrated to $\alpha$ for every candidate $w$, the specificity
term $\lambda_1\{\alpha-\operatorname{Spec}(w)\}_{+}$ is essentially inactive; under the
lead-time profile $(50,0,0.7)$ the operative trade-off is therefore between the retained
sensitivity term $-\operatorname{Sens}_W(w)$, which carries unit weight, and the lead-time
reward $-\lambda_3\operatorname{LT}(w)$, with $\lambda_3=0.7$ and $\operatorname{LT}$ entered on
the two-year normalized scale $\operatorname{LT}/\operatorname{LT}_{\mathrm{ref}}$,
$\operatorname{LT}_{\mathrm{ref}}=2$. Since the unit-weight sensitivity term outweighs a
fractional-year lead increment, the optimizer lengthens lead time only among weightings that
keep sensitivity high; the tie-breakers (higher $\operatorname{Sens}_W$ first, then lower
referral, then sparser weights) further rule out the degenerate solution that would flag a
single late-rising case to inflate the median lead. Lead time is thus optimized subject to a
sensitivity floor rather than in isolation.}

\paragraph{Robust multi-window / multi-subtype extension.} When detection must be
robust across several windows or disease subtypes $\{\mu_{\Delta,k}\}_{k=1}^K$,
maximize the worst-case noncentrality
\[
\max_{w}\ \min_{k}\ \frac{w^\top\mu_{\Delta,k}}{\sqrt{w^\top\Sigma_R w}}.
\]
Fixing the scale $w^\top\Sigma_R w=1$ and introducing an epigraph variable $s$, this is
$\max_{w,s}\,s$ subject to $w^\top\mu_{\Delta,k}\ge s$ for all $k$ and
$\|\Sigma_R^{1/2}w\|_2\le1$, a second-order cone program---hence convex with a global
optimum. Ellipsoidal uncertainty $\mu_{\Delta,k}\in\{\bar\mu_k+E_k u:\|u\|_2\le1\}$
replaces each linear constraint by $w^\top\bar\mu_k-\|E_k^\top w\|_2\ge s$, preserving
the second-order cone structure. When empirical clinical metrics are optimized
directly the objective is non-convex and we use scalarization of the loss
(equation~(6)) with the stated tie-breakers, or a population-based multi-objective
search (NSGA-II) for a Pareto frontier \citep{Deb2002}.

\paragraph{Complexity.} With $n$ subjects, $p$ markers, mean visit count $\bar T$, and
$G$ evaluated weight vectors: fitting the layer is $O(p\,n\bar T)$; estimating
$\Sigma_R$ is $O(p^2 n\bar T)$; per-$w$ scoring is $O(p\,n\bar T)$ and the variance
$O(p^2 n\bar T)$ (or $O(p\,n\bar T)$ with a diagonal/shrunk $\Sigma_R$). Caching $R$,
using a shrunk $\Sigma_R$, and precomputing a Cholesky factor for
$v=\|\,\mathrm{chol}(\Sigma_R)\,w\|_2^2$ keep the search efficient; evaluations
parallelize over $w$ and subjects. \snew{In practice these costs are small: a single end-to-end fit of the six-marker cohort under the sensitivity objective (weight optimization plus layer fitting and scoring) completes in about $7$ seconds on a 2023 laptop (Apple M2~Pro, 16~GB RAM), so a full sweep over the three objectives and three comparisons runs in about a minute and no special tuning is required.}

\section{Simulation design details}
\label{app:simdesign}
Healthy subjects follow $Y_{ijt}=\mu_j+b_{ij}+\varepsilon_{ijt}$ with cross-marker
correlation applied to $(\varepsilon_{i1t},\dots,\varepsilon_{ipt})$; cases share the
baseline distribution then drift over the final $W_0$ months,
$Y_{ijt}=\mu_j+b_{ij}+\varepsilon_{ijt}+\delta_j(1-\tfrac{T_i-t}{W_0})_+
\mathbf 1\{t\in[T_i-W_0,T_i)\}$ for $j\in\mathcal D$. Thus cases and controls share an
identical baseline and diverge only within the pre-diagnostic window, so a marker
carries signal only when observed late enough---the defining feature of a screening
problem---and the drift set $\mathcal D$ with unequal effects $\delta_j$ makes the
informative subset both sparse and heterogeneous.

\paragraph{Design rationale.} Each factorial axis isolates a regime that separates
objective-driven weighting from fixed composites. \emph{Dimension} $p\in\{5,10\}$
increases the number of markers a combiner must balance; equal-weight and PCA-1 dilute
signal with noise as $p$ grows, whereas the discriminant direction
$\Sigma_R^{-1}\mu_\Delta$ does not. \emph{Correlation} (low, high, two-block) controls
redundancy among markers; correlated noise is exactly where whitening by
$\Sigma_R^{-1}$ helps most and where naive averaging is least efficient, and the
two-block structure mimics pathway-grouped assays. \emph{Sparsity} $|\mathcal D|/p$
sets the fraction of markers that actually carry disease signal, testing robustness to
diluting the composite with uninformative markers. \emph{Misspecification} stresses the
Gaussian working model: $t_5$ adds heavy tails, AR(1) adds residual autocorrelation,
and log-normal adds multiplicative skew---the last being the adversarial case in which
one dominant marginal marker can beat any composite, \rev{which is why every one of the
$9$ cells where the best single marker exceeds iPEB is a log-normal cell; even there, once
the dimension is high enough for signal to be spread across markers, iPEB still wins $3$ of
the $12$ log-normal cells.} This mapping lets the per-cell
results (\hyperref[tab:ext-p5]{Web Tables~1--2}) be read directly: gains grow with dimension and correlation,
and shrink toward the single-marker boundary under log-normal skew.

The factorial varies $p\in\{5,10\}$; correlation (low $0.2^{|j-k|}$, high $0.7^{|j-k|}$,
two-block); sparsity $|\mathcal D|/p\in\{0.4,0.6\}$ with unequal effects; and
misspecification (none, $t_5$, AR(1) with $\phi=0.4$, log-normal multiplicative),
giving 48 cells $\times$ 100 cohorts. \rev{Within each cohort a random \amk{$75\%$} of subjects
(cases and controls) forms the training sample on which the iPEB weights, the comparators,
and the operating threshold are learned; all metrics are then evaluated on the held-out
\amk{$25\%$}, and averaged over the 100 cohorts.} Comparators (best marginal PEB, equal-weight,
PCA-1) use identical PEB thresholding, with the best marginal marker chosen on the
training data by training sensitivity; iPEB is evaluated under the
sensitivity objective. \snew{The four} mechanism scenarios fix specificity at $0.95$ and, within each cohort, train on a random $70\%$ of subjects and evaluate on the held-out $30\%$: S1
(heterogeneous markers, similar timing), S2 (differential timing: \rev{an early-rising marker $\sim$2 years before diagnosis versus a late-rising marker $\sim$3 months before}, panel developed on an independent cohort), S3 (healthy
random slope with AR(1) residuals to test the time-gap--aware layer)\snew{, and S4 (the S3 mechanism on an \emph{irregularly} spaced grid; see below)}. \rev{\snew{The first three}
mechanism scenarios place every subject on a common \emph{regular} visit grid, so each
isolates a single mechanism cleanly; the factorial study instead draws each subject's
visit times \emph{irregularly}, as a mixture of semiannual and annual inter-visit gaps,
so the gap-aware AR(1)/OU innovation variance is genuinely exercised in its AR(1) cells
(as it is, naturally, on the real cohort, whose draws are irregularly spaced). Because
each simulated subject contributes at least five visits, the random slope is enabled
throughout the simulations (except in the deliberately intercept-only comparator arms of
S3 and S4); the AR(1)/OU innovation is switched on only where the
generative process carries serial correlation or drift (S3, S4, and the AR(1) factorial cells),
and reduces to i.i.d.\ otherwise. Within each simulated training set a further $25\%$ of
subjects is held out as an internal validation slice used to choose the
scalar-versus-multivariate variant (feature selection is not used in the simulations);
the test subjects are never used in fitting. Detection---both sensitivity and lead time---is
scored over the whole pre-diagnostic trajectory (a case counts as detected if any pre-diagnostic
visit crosses the threshold) in \emph{both} weight optimization and evaluation, for every
scenario and objective; no detection window is imposed. A window $W$ may optionally be supplied to
restrict the objective to a fixed pre-diagnostic horizon, but it is not used here.}

\paragraph{\snew{Scenario~S4 (irregular visit spacing).}} \snew{A fourth mechanism scenario tests the
time-gap--aware layer when visits are unevenly spaced. It uses the same two-marker,
drift-plus-serial-correlation mechanism as S3, but each subject is measured on its own
\emph{irregular} schedule of eight visits whose inter-visit gaps are drawn from a mixture of
$3$-, $6$-, $12$-, and $24$-month intervals (about three months to two years), and residuals follow
a continuous-time Ornstein--Uhlenbeck process (the same serial correlation as S3), so the elapsed gap
$\Delta$ varies within and across subjects. Cases and controls share the same spacing distribution,
so visit timing carries no information about case status. A gap-ignorant PEB (intercept-only,
i.i.d.) is compared with the gap-aware iPEB (random slope with gap-scaled AR(1)/OU innovations);
both are scored over the whole pre-diagnostic trajectory at $95\%$ per-visit specificity under the
sensitivity objective, with no detection window. Unable to rescale for the varying gaps, the
gap-ignorant rule detects far fewer cases (sensitivity $0.275$ versus $0.892$; AUC $0.785$ versus
$0.978$), and the drop from its regular-grid S3 value ($0.529$) isolates the specific cost of
ignoring irregular spacing (Table~1 of the main text; \hyperref[webfig:s4]{Web Figure~2}).}

\section*{Web Figures}
\addcontentsline{toc}{section}{Web Figures}

\begin{figure}[htbp]\centering
\includegraphics[width=\linewidth]{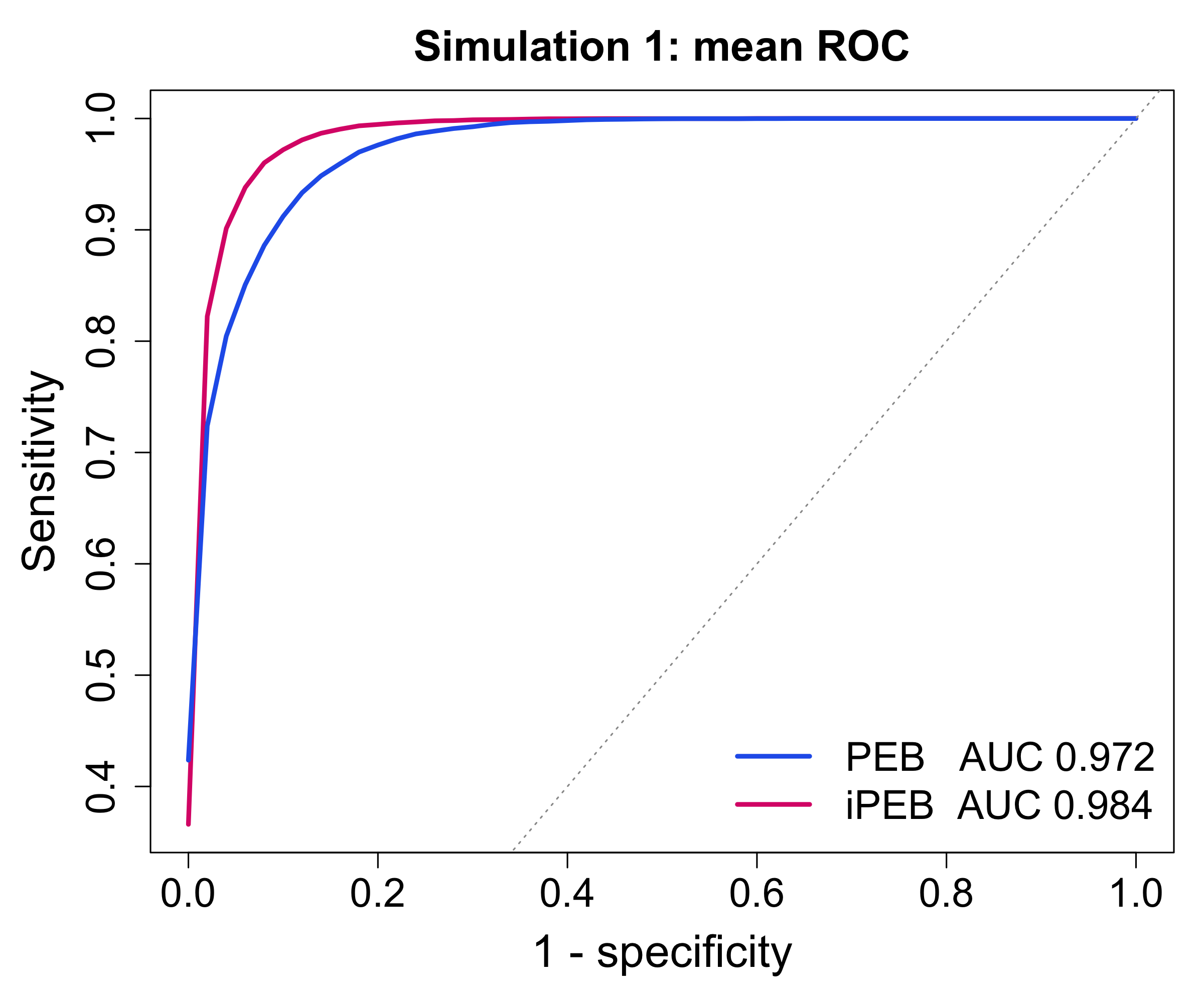}
\caption{\rev{Mechanism Simulation~1 (marker heterogeneity): mean ROC over 200 cohorts. Two case subgroups are each driven by a \emph{different} marker, so no single fixed composite aligns with both. History-adjusted PEB on the fixed composite is compared with iPEB, which reweights the same markers for the sensitivity objective. The iPEB curve lies above PEB across the whole range, raising the area under the curve from $0.972$ to $0.984$, with the largest gain in the low-false-positive (high-specificity) region relevant to screening. The corresponding summary numbers (AUC, sensitivity, lead time) are in Table~1 of the main text.}}
\label{webfig:s1}

\medskip
\begin{minipage}{0.92\linewidth}\footnotesize\raggedright
\rev{\textit{Alt text:} An ROC plot with false-positive rate on the horizontal axis and sensitivity on the vertical axis; the iPEB curve sits above the PEB curve everywhere, with labelled AUCs of $0.984$ and $0.972$.}
\end{minipage}
\end{figure}

\begin{figure}[htbp]\centering
\includegraphics[width=\linewidth]{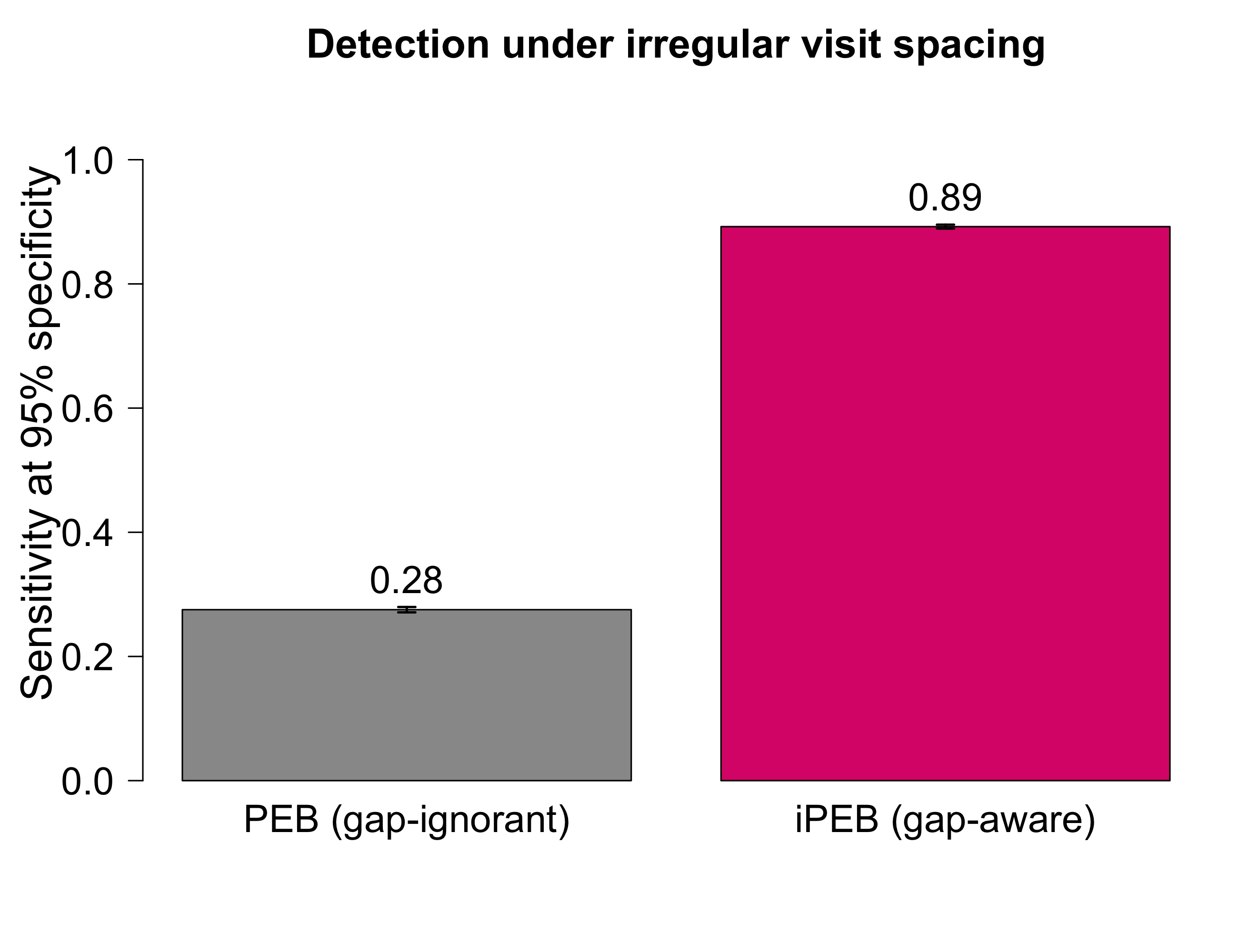}
\caption{\snew{Mechanism Simulation~4 (irregular visit spacing): per-patient sensitivity at $95\%$ per-visit specificity over 200 cohorts, under the sensitivity objective. Healthy subjects carry a random slope and continuous-time serial correlation, and each subject's visits are irregularly spaced (inter-visit gaps from about three months to two years). The gap-ignorant PEB (intercept-only, i.i.d.) cannot rescale for the varying gaps and detects few cases, whereas the gap-aware iPEB (random slope with gap-scaled AR(1)/OU innovations) detects the great majority. \amk{Error bars: $\pm1$ Monte Carlo standard error of the mean over the 200 cohorts.} Summary numbers appear in Table~1 of the main text.}}
\label{webfig:s4}

\medskip
\begin{minipage}{0.92\linewidth}\footnotesize\raggedright
\snew{\textit{Alt text:} A two-bar chart of sensitivity at $95\%$ specificity under irregular visit spacing; the gap-aware iPEB bar (about $0.89$) is much taller than the gap-ignorant PEB bar (about $0.28$)\amk{, each bar carrying a small error whisker}.}
\end{minipage}
\end{figure}

\section*{Web Tables}
\addcontentsline{toc}{section}{Web Tables}

\begin{table}[htbp]\centering\footnotesize
\caption{\rev{\amk{Factorial study}, $p=5$: per-patient sensitivity at $95\%$ per-visit specificity under whole-trajectory detection, mean\,(SD) over 100 Monte Carlo cohorts. Best method per cell in \textbf{bold}.}}
\label{tab:ext-p5}
\begin{tabular}{lll cccc}
\hline
Corr & Spar & Misspec & Best-marg.\ PEB & Equal-wt & PCA-1 & iPEB\\
\hline
\rev{low} & \rev{0.4} & \rev{none} & \rev{0.286\,(0.067)} & \rev{0.295\,(0.063)} & \rev{0.291\,(0.063)} & \rev{\textbf{0.326\,(0.067)}}\\
\rev{high} & \rev{0.4} & \rev{none} & \rev{0.299\,(0.065)} & \rev{0.294\,(0.064)} & \rev{0.291\,(0.063)} & \rev{\textbf{0.332\,(0.080)}}\\
\rev{block} & \rev{0.4} & \rev{none} & \rev{0.300\,(0.065)} & \rev{0.294\,(0.068)} & \rev{0.296\,(0.062)} & \rev{\textbf{0.324\,(0.084)}}\\
\rev{low} & \rev{0.6} & \rev{none} & \rev{0.291\,(0.073)} & \rev{0.310\,(0.065)} & \rev{0.311\,(0.060)} & \rev{\textbf{0.319\,(0.070)}}\\
\rev{high} & \rev{0.6} & \rev{none} & \rev{0.307\,(0.071)} & \rev{0.307\,(0.065)} & \rev{0.307\,(0.064)} & \rev{\textbf{0.322\,(0.077)}}\\
\rev{block} & \rev{0.6} & \rev{none} & \rev{0.303\,(0.070)} & \rev{0.308\,(0.069)} & \rev{0.305\,(0.067)} & \rev{\textbf{0.317\,(0.071)}}\\
\rev{low} & \rev{0.4} & \rev{$t_5$} & \rev{0.295\,(0.062)} & \rev{0.294\,(0.071)} & \rev{0.287\,(0.070)} & \rev{\textbf{0.301\,(0.072)}}\\
\rev{high} & \rev{0.4} & \rev{$t_5$} & \rev{0.285\,(0.067)} & \rev{0.284\,(0.066)} & \rev{0.284\,(0.068)} & \rev{\textbf{0.310\,(0.081)}}\\
\rev{block} & \rev{0.4} & \rev{$t_5$} & \rev{0.290\,(0.067)} & \rev{0.293\,(0.063)} & \rev{0.288\,(0.068)} & \rev{\textbf{0.304\,(0.074)}}\\
\rev{low} & \rev{0.6} & \rev{$t_5$} & \rev{0.302\,(0.061)} & \rev{\textbf{0.312\,(0.079)}} & \rev{0.306\,(0.074)} & \rev{0.305\,(0.066)}\\
\rev{high} & \rev{0.6} & \rev{$t_5$} & \rev{0.298\,(0.070)} & \rev{0.291\,(0.071)} & \rev{0.291\,(0.071)} & \rev{\textbf{0.314\,(0.074)}}\\
\rev{block} & \rev{0.6} & \rev{$t_5$} & \rev{0.287\,(0.070)} & \rev{0.296\,(0.070)} & \rev{0.290\,(0.075)} & \rev{\textbf{0.317\,(0.069)}}\\
\rev{low} & \rev{0.4} & \rev{AR(1)} & \rev{0.302\,(0.074)} & \rev{0.288\,(0.064)} & \rev{0.289\,(0.063)} & \rev{\textbf{0.311\,(0.067)}}\\
\rev{high} & \rev{0.4} & \rev{AR(1)} & \rev{0.307\,(0.073)} & \rev{0.286\,(0.057)} & \rev{0.287\,(0.058)} & \rev{\textbf{0.332\,(0.079)}}\\
\rev{block} & \rev{0.4} & \rev{AR(1)} & \rev{0.301\,(0.064)} & \rev{0.288\,(0.067)} & \rev{0.291\,(0.061)} & \rev{\textbf{0.332\,(0.082)}}\\
\rev{low} & \rev{0.6} & \rev{AR(1)} & \rev{0.309\,(0.075)} & \rev{0.322\,(0.071)} & \rev{0.316\,(0.068)} & \rev{\textbf{0.328\,(0.076)}}\\
\rev{high} & \rev{0.6} & \rev{AR(1)} & \rev{0.309\,(0.079)} & \rev{0.307\,(0.068)} & \rev{0.309\,(0.067)} & \rev{\textbf{0.321\,(0.087)}}\\
\rev{block} & \rev{0.6} & \rev{AR(1)} & \rev{0.310\,(0.072)} & \rev{0.313\,(0.069)} & \rev{0.304\,(0.064)} & \rev{\textbf{0.325\,(0.077)}}\\
\rev{low} & \rev{0.4} & \rev{log-normal} & \rev{\textbf{0.349\,(0.071)}} & \rev{0.313\,(0.070)} & \rev{0.309\,(0.069)} & \rev{0.336\,(0.074)}\\
\rev{high} & \rev{0.4} & \rev{log-normal} & \rev{\textbf{0.341\,(0.068)}} & \rev{0.319\,(0.071)} & \rev{0.318\,(0.068)} & \rev{0.330\,(0.073)}\\
\rev{block} & \rev{0.4} & \rev{log-normal} & \rev{\textbf{0.335\,(0.073)}} & \rev{0.311\,(0.066)} & \rev{0.315\,(0.070)} & \rev{0.324\,(0.079)}\\
\rev{low} & \rev{0.6} & \rev{log-normal} & \rev{\textbf{0.348\,(0.077)}} & \rev{0.323\,(0.071)} & \rev{0.322\,(0.070)} & \rev{0.326\,(0.073)}\\
\rev{high} & \rev{0.6} & \rev{log-normal} & \rev{\textbf{0.345\,(0.074)}} & \rev{0.326\,(0.076)} & \rev{0.325\,(0.071)} & \rev{0.325\,(0.079)}\\
\rev{block} & \rev{0.6} & \rev{log-normal} & \rev{\textbf{0.353\,(0.079)}} & \rev{0.327\,(0.070)} & \rev{0.333\,(0.074)} & \rev{0.314\,(0.072)}\\
\hline
\end{tabular}
\end{table}

\begin{table}[htbp]\centering\footnotesize
\caption{\rev{\amk{Factorial study}, $p=10$: per-patient sensitivity at $95\%$ per-visit specificity under whole-trajectory detection, mean\,(SD) over 100 Monte Carlo cohorts. Best method per cell in \textbf{bold}.}}
\label{tab:ext-p10}
\begin{tabular}{lll cccc}
\hline
Corr & Spar & Misspec & Best-marg.\ PEB & Equal-wt & PCA-1 & iPEB\\
\hline
\rev{low} & \rev{0.4} & \rev{none} & \rev{0.296\,(0.067)} & \rev{0.303\,(0.065)} & \rev{0.289\,(0.059)} & \rev{\textbf{0.339\,(0.072)}}\\
\rev{high} & \rev{0.4} & \rev{none} & \rev{0.292\,(0.072)} & \rev{0.297\,(0.067)} & \rev{0.298\,(0.070)} & \rev{\textbf{0.378\,(0.089)}}\\
\rev{block} & \rev{0.4} & \rev{none} & \rev{0.304\,(0.076)} & \rev{0.289\,(0.066)} & \rev{0.287\,(0.066)} & \rev{\textbf{0.377\,(0.084)}}\\
\rev{low} & \rev{0.6} & \rev{none} & \rev{0.312\,(0.071)} & \rev{0.321\,(0.069)} & \rev{0.310\,(0.068)} & \rev{\textbf{0.354\,(0.068)}}\\
\rev{high} & \rev{0.6} & \rev{none} & \rev{0.318\,(0.070)} & \rev{0.304\,(0.069)} & \rev{0.307\,(0.072)} & \rev{\textbf{0.372\,(0.076)}}\\
\rev{block} & \rev{0.6} & \rev{none} & \rev{0.318\,(0.072)} & \rev{0.307\,(0.063)} & \rev{0.306\,(0.063)} & \rev{\textbf{0.365\,(0.075)}}\\
\rev{low} & \rev{0.4} & \rev{$t_5$} & \rev{0.303\,(0.071)} & \rev{0.314\,(0.067)} & \rev{0.304\,(0.062)} & \rev{\textbf{0.340\,(0.077)}}\\
\rev{high} & \rev{0.4} & \rev{$t_5$} & \rev{0.306\,(0.073)} & \rev{0.295\,(0.064)} & \rev{0.291\,(0.070)} & \rev{\textbf{0.374\,(0.086)}}\\
\rev{block} & \rev{0.4} & \rev{$t_5$} & \rev{0.308\,(0.070)} & \rev{0.289\,(0.069)} & \rev{0.287\,(0.069)} & \rev{\textbf{0.374\,(0.077)}}\\
\rev{low} & \rev{0.6} & \rev{$t_5$} & \rev{0.308\,(0.073)} & \rev{0.341\,(0.069)} & \rev{0.331\,(0.066)} & \rev{\textbf{0.350\,(0.068)}}\\
\rev{high} & \rev{0.6} & \rev{$t_5$} & \rev{0.308\,(0.072)} & \rev{0.314\,(0.066)} & \rev{0.313\,(0.072)} & \rev{\textbf{0.365\,(0.076)}}\\
\rev{block} & \rev{0.6} & \rev{$t_5$} & \rev{0.312\,(0.064)} & \rev{0.307\,(0.066)} & \rev{0.307\,(0.067)} & \rev{\textbf{0.361\,(0.073)}}\\
\rev{low} & \rev{0.4} & \rev{AR(1)} & \rev{0.295\,(0.068)} & \rev{0.307\,(0.064)} & \rev{0.299\,(0.061)} & \rev{\textbf{0.341\,(0.073)}}\\
\rev{high} & \rev{0.4} & \rev{AR(1)} & \rev{0.299\,(0.068)} & \rev{0.290\,(0.067)} & \rev{0.291\,(0.070)} & \rev{\textbf{0.384\,(0.083)}}\\
\rev{block} & \rev{0.4} & \rev{AR(1)} & \rev{0.310\,(0.071)} & \rev{0.286\,(0.068)} & \rev{0.286\,(0.068)} & \rev{\textbf{0.383\,(0.085)}}\\
\rev{low} & \rev{0.6} & \rev{AR(1)} & \rev{0.318\,(0.074)} & \rev{0.335\,(0.065)} & \rev{0.324\,(0.070)} & \rev{\textbf{0.359\,(0.073)}}\\
\rev{high} & \rev{0.6} & \rev{AR(1)} & \rev{0.319\,(0.068)} & \rev{0.302\,(0.070)} & \rev{0.304\,(0.071)} & \rev{\textbf{0.380\,(0.087)}}\\
\rev{block} & \rev{0.6} & \rev{AR(1)} & \rev{0.322\,(0.070)} & \rev{0.301\,(0.062)} & \rev{0.299\,(0.063)} & \rev{\textbf{0.386\,(0.075)}}\\
\rev{low} & \rev{0.4} & \rev{log-normal} & \rev{\textbf{0.368\,(0.080)}} & \rev{0.317\,(0.065)} & \rev{0.300\,(0.068)} & \rev{0.349\,(0.077)}\\
\rev{high} & \rev{0.4} & \rev{log-normal} & \rev{0.342\,(0.083)} & \rev{0.324\,(0.063)} & \rev{0.326\,(0.068)} & \rev{\textbf{0.361\,(0.084)}}\\
\rev{block} & \rev{0.4} & \rev{log-normal} & \rev{0.343\,(0.078)} & \rev{0.318\,(0.062)} & \rev{0.317\,(0.062)} & \rev{\textbf{0.354\,(0.083)}}\\
\rev{low} & \rev{0.6} & \rev{log-normal} & \rev{0.365\,(0.077)} & \rev{0.340\,(0.067)} & \rev{0.333\,(0.073)} & \rev{\textbf{0.369\,(0.069)}}\\
\rev{high} & \rev{0.6} & \rev{log-normal} & \rev{\textbf{0.371\,(0.080)}} & \rev{0.330\,(0.069)} & \rev{0.334\,(0.071)} & \rev{0.366\,(0.079)}\\
\rev{block} & \rev{0.6} & \rev{log-normal} & \rev{\textbf{0.367\,(0.079)}} & \rev{0.336\,(0.064)} & \rev{0.337\,(0.064)} & \rev{0.359\,(0.074)}\\
\hline
\end{tabular}
\end{table}

\begin{table}[htbp]\centering\footnotesize
\caption{\rev{PLCO lung cohort, held-out test centers: complete four-marker and six-marker comparisons across all three iPEB objectives. Sensitivity (Sens) and median lead time in years (Lead) at three operating specificities; specificity is per-visit, sensitivity and lead per-patient over the whole trajectory; test-set AUC. PEB (frozen) is unaffected by the objective, which reweights only iPEB.}}
\label{tab:supp-lung-full}
\begin{tabular}{l ccc ccc c}
\hline
 & \multicolumn{3}{c}{Sensitivity (by spec.)} & \multicolumn{3}{c}{Lead time, yr (by spec.)} & \\
Method & $0.60$ & $0.95$ & $0.99$ & $0.60$ & $0.95$ & $0.99$ & AUC\\
\hline
\multicolumn{8}{l}{\emph{Four-marker (frozen \amk{4MP} vs.\ iPEB on the same four markers)}}\\
PEB (frozen 4MP)      & 0.949 & 0.434 & 0.182 & 2.60 & 2.30 & 1.09 & 0.875\\
iPEB (sensitivity)    & 0.939 & 0.485 & 0.242 & 2.59 & 2.01 & 1.18 & 0.877\\
iPEB (lead time)      & 0.939 & 0.505 & 0.242 & 2.49 & 2.01 & 1.21 & 0.875\\
iPEB (combined)       & 0.939 & 0.505 & 0.242 & 2.49 & 2.01 & 1.21 & 0.874\\
\hline
\multicolumn{8}{l}{\emph{Six-marker (frozen logistic vs.\ iPEB on the same six markers)}}\\
PEB (frozen logistic) & 0.960 & 0.414 & 0.162 & 2.59 & 2.30 & 1.06 & 0.869\\
iPEB (sensitivity)    & 0.939 & 0.485 & 0.242 & 2.49 & 2.01 & 1.18 & 0.876\\
iPEB (lead time)      & 0.939 & 0.495 & 0.242 & 2.49 & 1.93 & 1.21 & 0.876\\
iPEB (combined)       & 0.939 & 0.455 & 0.202 & 2.59 & 1.50 & 0.71 & 0.870\\
\hline
\end{tabular}
\end{table}

\begin{table}[htbp]\centering\footnotesize
\caption{\rev{PLCO lung cohort, \emph{training} centers (in-sample companion to \hyperref[tab:supp-lung-full]{Web Table~3}; six centers, 225 case / 1{,}177 control subjects): four-marker and six-marker comparisons across all three iPEB objectives, same layout and calibration as the test table.}}
\label{tab:supp-lung-full-train}
\begin{tabular}{l ccc ccc c}
\hline
 & \multicolumn{3}{c}{Sensitivity (by spec.)} & \multicolumn{3}{c}{Lead time, yr (by spec.)} & \\
Method & $0.60$ & $0.95$ & $0.99$ & $0.60$ & $0.95$ & $0.99$ & AUC\\
\hline
\multicolumn{8}{l}{\emph{Four-marker (frozen \amk{4MP} vs.\ iPEB on the same four markers)}}\\
PEB (frozen 4MP)      & 0.920 & 0.409 & 0.116 & 2.63 & 1.65 & 0.55 & 0.858\\
iPEB (sensitivity)    & 0.938 & 0.431 & 0.160 & 2.49 & 1.71 & 0.70 & 0.870\\
iPEB (lead time)      & 0.942 & 0.427 & 0.156 & 2.54 & 1.85 & 0.57 & 0.870\\
iPEB (combined)       & 0.942 & 0.453 & 0.156 & 2.54 & 1.72 & 0.57 & 0.869\\
\hline
\multicolumn{8}{l}{\emph{Six-marker (frozen logistic vs.\ iPEB on the same six markers)}}\\
PEB (frozen logistic) & 0.920 & 0.382 & 0.116 & 2.66 & 1.84 & 0.91 & 0.854\\
iPEB (sensitivity)    & 0.938 & 0.431 & 0.173 & 2.49 & 1.71 & 0.83 & 0.870\\
iPEB (lead time)      & 0.942 & 0.431 & 0.160 & 2.52 & 1.73 & 0.70 & 0.869\\
iPEB (combined)       & 0.942 & 0.396 & 0.160 & 2.52 & 1.44 & 0.51 & 0.866\\
\hline
\end{tabular}
\end{table}

\begin{table}[htbp]\centering\footnotesize
\caption{\rev{PLCO lung cohort, selected-four comparison across all three iPEB objectives, on both the test and training centers (companion to \hyperref[tab:supp-lung-full]{Web Tables~3--4}). Each method reduces the six markers to four on the training centers---PEB by logistic $z$-statistic ranking (objective-independent, hence a single row), iPEB by objective-driven backward elimination stopped at four. Sensitivity (Sens) and median lead time in years (Lead) at three specificities, with AUC. The logistic ranking recovered exactly the four published 4MP markers (CA125, CEA, CYFRA~21-1, pro-SFTPB), on which the PEB side refits a logistic on the training centers (log-normalized markers): intercept $-13.14$, CA125 $0.501$, CEA $0.935$, CYFRA~21-1 $0.350$, pro-SFTPB $1.535$. iPEB selected CEA, pro-SFTPB, osteopontin and HE4 under the sensitivity objective, CEA, CYFRA~21-1, osteopontin and HE4 under the lead-time objective, and CA125, CYFRA~21-1, osteopontin and HE4 under the combined objective. The three objectives therefore give three distinct selected panels and three distinct rows.}}
\label{tab:supp-lung-sel}
\begin{tabular}{l ccc ccc c}
\hline
 & \multicolumn{3}{c}{Sensitivity (by spec.)} & \multicolumn{3}{c}{Lead time, yr (by spec.)} & \\
Method & $0.60$ & $0.95$ & $0.99$ & $0.60$ & $0.95$ & $0.99$ & AUC\\
\hline
\multicolumn{8}{l}{\emph{Test centers}}\\
PEB (logistic-selected 4)  & 0.960 & 0.455 & 0.172 & 2.59 & 2.11 & 1.13 & 0.871\\
iPEB (sensitivity)         & 0.919 & 0.515 & 0.192 & 2.64 & 1.90 & 1.07 & 0.873\\
iPEB (lead time)           & 0.929 & 0.535 & 0.242 & 2.55 & 1.90 & 1.13 & 0.887\\
iPEB (combined)            & 0.889 & 0.364 & 0.162 & 2.28 & 1.49 & 1.16 & 0.832\\
\hline
\multicolumn{8}{l}{\emph{Training centers (in-sample)}}\\
PEB (logistic-selected 4)  & 0.924 & 0.396 & 0.107 & 2.68 & 1.82 & 0.84 & 0.853\\
iPEB (sensitivity)         & 0.893 & 0.373 & 0.138 & 2.59 & 1.54 & 0.75 & 0.843\\
iPEB (lead time)           & 0.889 & 0.356 & 0.173 & 2.62 & 1.57 & 0.83 & 0.832\\
iPEB (combined)            & 0.880 & 0.333 & 0.120 & 2.34 & 1.45 & 1.18 & 0.827\\
\hline
\end{tabular}
\end{table}

\begin{table}[htbp]\centering\footnotesize
\caption{\rev{PLCO lung cohort: subject-level cross-classification of the case subjects at each operating specificity, on the held-out test centers ($n=99$ cases) and the training centers ($n=225$ cases), counting cases detected by both methods, by iPEB only, by the frozen/selected panel only, or by neither, for all three comparisons. iPEB uses the sensitivity objective and the threshold is calibrated on training controls; this unweighted count accompanies the tabulated sensitivities and shows the net movement of individual patients.}}
\label{tab:supp-lung-paired}
\begin{tabular}{lll cccc}
\hline
Comparison & Set & Spec & Both & iPEB only & Panel only & Neither\\
\hline
\multirow{6}{*}{Four-marker}
 & Test  & 0.60 & 91  & 2  & 3  & 3\\
 &       & 0.95 & 39  & 9  & 4  & 47\\
 &       & 0.99 & 16  & 8  & 2  & 73\\
\cline{2-7}
 & Train & 0.60 & 202 & 9  & 5  & 9\\
 &       & 0.95 & 75  & 22 & 17 & 111\\
 &       & 0.99 & 24  & 12 & 2  & 187\\
\hline\hline
\multirow{6}{*}{Six-marker}
 & Test  & 0.60 & 92  & 1  & 3  & 3\\
 &       & 0.95 & 37  & 11 & 4  & 47\\
 &       & 0.99 & 13  & 11 & 3  & 72\\
\cline{2-7}
 & Train & 0.60 & 201 & 10 & 6  & 8\\
 &       & 0.95 & 69  & 28 & 17 & 111\\
 &       & 0.99 & 22  & 17 & 4  & 182\\
\hline\hline
\multirow{6}{*}{Selected-four}
 & Test  & 0.60 & 91  & 0  & 4  & 4\\
 &       & 0.95 & 37  & 14 & 8  & 40\\
 &       & 0.99 & 12  & 7  & 5  & 75\\
\cline{2-7}
 & Train & 0.60 & 193 & 8  & 15 & 9\\
 &       & 0.95 & 63  & 21 & 26 & 115\\
 &       & 0.99 & 16  & 15 & 8  & 186\\
\hline
\end{tabular}
\end{table}

\bibliographystyle{apalike}
\bibliography{bibliography}